\definecolor{red}{rgb}{1,0.2,0.2}
\definecolor{green}{rgb}{0.2,1,0.5}
\definecolor{blue}{rgb}{0,0,1}
\definecolor{lightblue}{rgb}{0.3,0.5,1}
\newtheorem{lemma}{Lemma}
\newtheorem{theorem}{Theorem}
\newtheorem{corollary}{Corollary}
\newtheorem{definition}{Definition}
\newtheorem{assumption}{Assumption}
\theoremstyle{remark}
\newtheorem{remark}{Remark}
\newcommand{\T}{^{\mbox{\tiny \sf T}}}
\newcommand{\R}{\mathbb{R}}
\newcommand{\N}{\mathcal{N}}
\newcommand{\bbN}{\mathbb{N}}
\newcommand{\xad}{\hat{x}^{\rm ad}}
\newcommand{\ead}{e}
\newcommand{\sad}{\Sigma}
\DeclareMathOperator{\E}{\mathbb{E}}
\DeclareMathOperator{\bbP}{\mathbb{P}}
\newcommand{\inprod}[2]{\left< #1, #2 \right>}
\title{\bf Ensuring System-Level Protection against Eavesdropping Adversaries in Distributed Dynamical Systems}
\author{Dipankar Maity and Van Sy Mai
	\thanks{D. Maity is an Assistant Professor in the Electrical and Computer Engineering department at the University of North Carolina at Charlotte,  NC, 28223, USA. Email:
		{\small dmaity@uncc.edu}}
	\thanks{V. S. Mai is with the National Institute of Standards and Technology, MD 20899. Email:
		{\small vansy.mai@nist.gov}}%
	}
\begin{document}

\maketitle

\begin{abstract} 
    In this work, we address the objective of protecting the states of a distributed dynamical system from eavesdropping adversaries. We prove that state-of-the-art distributed algorithms, which rely on communicating the agents' states, are vulnerable in that the final states can be perfectly estimated by any adversary including those with arbitrarily small eavesdropping success probability.
    While existing literature typically adds an extra layer of protection, such as encryption or differential privacy techniques, we demonstrate the emergence of a fundamental protection quotient in distributed systems when innovation signals are communicated instead of the agents' states.
\end{abstract}

\section{Introduction}

Privacy against eavesdropping adversaries has been a major concern in distributed systems \cite{li2021privacy}. To protect data from eavesdropping adversaries one needs to deploy some mechanism where the eavesdroppers are not able to perfectly decode the data from the intercepted communications. 
Often, these mechanisms are to be deployed without having information about the adversary's capability and knowledge. 
In this work, we consider eavesdropping adversaries and protection against such adversaries \cite{anand2005quantifying, zhang2010p} for a class of systems following the setup of \Cref{fig:setup}.

In general, there are mainly two techniques for dealing eavesdropping adversaries, namely differential privacy \cite{dwork2008differential, han2016differentially} and secure multiparty computation \cite{shoukry2016privacy, lu2018privacy}. 
Differential privacy is a noncryptographic method for preserving privacy by carefully adding noise to exchanged messages. It is commonly used due to its computational simplicity. However, there is an inherent trade-off between privacy and accuracy that one has to take into account due to the nature of the method that intentionally adds noise to communication. 
Secure multiparty computation, on the other hand, refers to cryptographic techniques to ensure privacy in a distributed network, where the goal is to 
evaluate a function of a number of parties’ private data without revealing each party’s data to others; see, e.g., \cite{lu2018privacy, shoukry2016privacy}. 
As most of secure multiparty computation protocols trade algorithmic 
complexity for security, they may not be suited for many practical applications involving, e.g., systems with limited resources or subject to hard real-time constraints.

\begin{figure}
    \centering
    \includegraphics[width=1\linewidth]{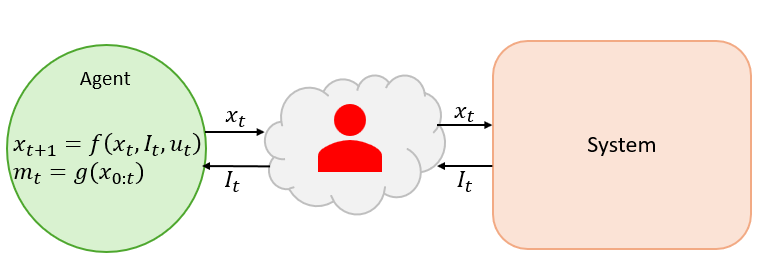}
    \caption{Problem setup}
    \vspace{-3mm}
    \label{fig:setup}
\end{figure}

Besides, existing approaches to address privacy concerns of distributed systems has mainly focused on the protection of the initial states of the agents \cite{mo2016privacy,charalambous2019privacy,manitara2013privacy,altafini2019dynamical}. 
However, in many distributed systems (e.g., distributed optimization \cite{Bertsekas89}, rendezvous problems \cite{dimarogonas2007rendezvous}, synchronization \cite{maggs2012consensus}, federated learning \cite{McMahan2017}) initial states often are of less importance and sometimes chosen arbitrarily; see e.g., \cite{shi2015extra, nedic2017achieving}. 
Instead, the final state of the agents are more important because they represent the solution of a certain decision making problem. 
For example, in \textit{networked consensus problems} \cite{olfati2004consensus}, an agent interacts with its neighbors and $I_t$ contains the states of the neighboring agents. Here, the objective is to agree on the final state $x^*$ to which all the agents' states will converge.  
\textit{Distributed consensus optimization} is a variant of the network consensus problems where a group of agents exchange their local information to collaboratively optimize a network-wide objective function, and 
In the context of \textit{federated learning},  where each client/agent shared its local weights of a trained neural network with a server for the purpose of aggregating the distributed information. The server sends the aggregated weight ($I_t$) back to the client for further update. In these cases, if the final state of one agent is eavesdropped, then so is that of the whole system. This requires modifications in privacy metrics as well as in algorithms/methods to achieve privacy.

\textit{Contributions:} The main contribution of this work is the analysis of system-level protection against an eavesdropping adversary under an \textit{innovation-sharing communication} protocol.
We derive an analytical expression for the achievable protection against a class of eavesdropping adversaries and demonstrate how the eavesdropper's capabilities affect this protection.
Our analysis reveals a fundamental connection between the achievable protection and the total quadratic variation of the agent's state trajectory.
By leveraging this analysis, we then demonstrate how the proposed method can be applied to protect the solutions of distributed optimization problems. 
To this end, we develop a Distributed Innovation-Sharing Consensus Optimization (DICO) algorithm. 
We also discuss the effects of the algorithm's parameters on protection and convergence speed, as well as their trade-offs.

\textit{Organization} We formally state the problem in \Cref{sec:formulation}, and the adversary's eavesdropping model is discussed in \Cref{Sec:adversarial_model}. 
The system-level protection against such adversaries is analyzed in \Cref{sec:mainAnalysis}, where we leverage the protocol of sharing the state increment (i.e., $m_t = x_t - x_{t-1}$) instead of the true state $x_t$ as an effective means of protection---such protocols are often categorized as \textit{innovation sharing schemes}. 
In \Cref{sec:distributedOptimization} we investigate distributed consensus optimization problems, as a special case of our developed theory and analysis. 
The evidence of a system-level privacy is demonstrated using numerical simulation on a distributed optimization problem in \Cref{secSimulation}.
The effects of certain hyperparameters of the optimization algorithm on the achieved protection is thoroughly discussed in that section. 

\textit{Notation:} 
Let $\bar{a} = 1-a$ for any $a\in \mathbb{R}$. For a matrix $A=[a_{ij}]$, let $a_{ij}$ denote 
its $(i,j)$ element, and $A\T$ its transpose. 
A directed graph $\mathcal{G} \!=\! (\mathcal{V}, 
\mathcal{E})$ consists of a set of nodes $\mathcal{V}$ 
and a set $\mathcal{E}  \subseteq  \mathcal{V} \times  
\mathcal{V}$ of directed edges. 
A directed path is a  sequence of edges in the form $\big( (i_1, i_2), (i_2, i_3),..., (i_{k-1}, i_k) \big)$. 
The graph 
$\mathcal{G}$ is strongly connected if there is a 
directed path from each node to any other node. 
Node $j$ is an in-neighbor (respectively, out-neighbor) to node $i$ if $(i,j) \in \mathcal{E}$ (respectively, $(j,i) \in \mathcal{E}$).
For each node $i$, we use $\mathcal{N}_i$ and $\mathcal{N}^i$ to denote the sets of its in-neighbors and out-neighbors, respectively. Assume that $i \in \mathcal{N}_i$ and $i \in \mathcal{N}^i$. 

\section{Problem Formulation} \label{sec:formulation}

We consider a scenario where an agent interacts with a system over a compromised communication channel (see \Cref{fig:setup}). 
The agent is required to transmit its states to, and receive data from, the system at every time instance over this channel.
The communication channel is compromised due to the presence of an eavesdropper that can intercept the incoming and outgoing messages of this agent with probability $\gamma \in (0, 1)$.
The eavesdropper's objective is to estimate the agent's state $x_t$ as closely as possible. 
In some applications, the eavesdropper's objective is to only estimate the `final state' of the agent, i.e., $\lim_{t\to T} x_t$, where $T$ could be finite or infinite. 
The agent may not be aware of the presence of such adversaries. 

\subsection{Agent Dynamic Model and Assumption} 
The agent follows the dynamics
\begin{align} \label{eq:dyn}
    x_{t+1} &= f(x_t, I_t, u_t),\\
    m_t &=g(x_t, \ldots, x_0)
\end{align}
where $x_t \in \R^n$, $I_t \in \R^p$ and $u_t\in \R^m$ are the agent's states, received data, and control input, respectively, at time $t$. Here, $m_t$ is the message that is sent from the agent to the system at time $t$ and $g$ in general can be a function that depends on the agent's states up to time $t$.

\begin{assumption} \label{assm:state_convergence}
    The agent state following dynamic \eqref{eq:dyn} converges for any $x_0$. 
\end{assumption}

This assumption means that we consider only stable dynamics, which applies to numerous practical applications as mentioned in the previous section. This is also a key difference between our model and that in \cite{tsiamis2019state}, which is an unstable linear dynamics instead.
\subsection{Eavesdropping Adversaries} \label{Sec:adversarial_model}
Assume that the adversary eavesdrops all the outgoing (and incoming) transmissions of the agent.
Let $\hat{x}_t$ denote the adversary's estimate at time $t$. 
The eavesdropping mechanism is probabilistic and may lead to failed interception of the transmitted messages, similar to the models in\cite{li2018novel, tsiamis2019state}.
The success rate of eavesdropping depends on several factors including, e.g.,  
signal-to-interference-plus-noise ratio, 
channel condition, 
and  directionalities of transmitting and receiving antennas.  
This limitation in eavesdropping capability is often modeled by a randomness in the eavesdropping outcome.

Let $\mu_t$ be a Bernoulli random variable such that
\begin{align} \label{eq:mu_bernoulli}
    \mu_t=\begin{cases} 1,\!\! &\text{successful eavesdropping at time $t$},\\
    0, &\text{otherwise}.
    \end{cases}
\end{align}
  The random variables $\mu_t$ and $\mu_s$ are independent of each other for all $t\ne s$ and $\mu_t$ does not depend on the states of the physical system $\{x_s\}_{s\in \bbN_0}$.
  We denote $\bbP(\mu_t=1) \triangleq \gamma = 1-\bbP(\mu_t=0)$, for some $\gamma \in (0, 1)$. 
  We exclude $\gamma =1$ since, in this case, the adversary intercepts everything and hence, no   \textit{protection} is achievable. 
  Similarly, we exclude $\gamma = 0$ since it implies that no adversary is present. 
  
 \begin{assumption} \label{assm:message_knowledge}
     The adversaries know the form of the exchanged messages.  
   For example, if the physical system exchanges $m_t = g(x_t)$ at time $t$, then the adversary knows the function $g(\cdot)$ and intercepts $m_t$.
 \end{assumption}

Assumption~\ref{assm:message_knowledge} implies that the adversary knows whether true states are communicated or not. 
Here, $g(\cdot)$ could be a deterministic or randomized function (e.g., quantization, encryption, adding noise to $x$). 
In general, $g$ may also depend on the past $x_{t-1},\ldots,x_0$. 
In this paper we consider the form $g(x_t, x_{t-1}) = x_t - x_{t-1}$ 
and demonstrate the benefits of this simple form in retaining privacy.

In this paper, we embark on this direction by studying the following simple class of adversary dynamics.
Let $m_t$ denote the message that the physical system broadcasts at time $t$, and $z_t$ is the adversary's estimate of $m_t$. Consider the following class of estimation dynamics 
\begin{align}
    z_t = \bar\mu_t b_tz_{t-1} + \mu_tm_t, \quad \forall t\ge 0, \label{eq_adversary_state_general}
\end{align}
for some initial state $z_{-1}$ chosen by the adversary (which can be a random vector) and a sequence of weights $\{b_t\}_{t\ge 0}$ designed by the adversary. Here, for any $t\ge 0$, 
if $b_t=0$, then 
    $z_t = \mu_tm_t$,  
    i.e., the adversary simply takes the eavesdropping outcome at time $t$ as its update, without considering past information.
On the other hand, if $b_t\neq 0$, then  $z_t = m_t$ when $\mu_t = 1$ and $z_t 
 = b_t z_{t-1}$ otherwise. In other words, if the current interception is unsuccessful, the adversary uses its last estimate $z_t$ with some weight $b_t$. 
    This is also equivalent to using the last successfully intercepted message, say $m_{\tau}$, with the weight $ \prod_{k=\tau+1}^t b_k$, considering the number of recent unsuccessful attempts.

In general, the time-varying weight $b_t$ depends on the adversary's knowledge and possibly on $\{\mu_0, \dots, \mu_t\}$. 
This is an open problem and left for future work.
To continue the analysis, we instead focus on the following special case \begin{align}
    b_0 = 1, \quad b_t = b \in \mathbb{R}, \quad \forall t\ge 1.\label{eq_fixed_weight_b}
\end{align}
Here, we assume $b_0 = 1$ simply to decouple $b$ and $z_{-1}$.

Note that the state $z$ will carry different meanings depending on the type of exchanged messages $m$. The adversary's goal is to use $z$ to estimate the convergence point of $x$.

\subsection{$\epsilon$-Protection Against Adversaries}
Let $\hat{x}_t$ denote the adversary's estimate of the state at time $t$ and let $e_t\triangleq \hat{x}_t - x_t$ denote the corresponding estimation error. 
Since the eavesdropping success is random, the estimated state $\hat{x}_t$ and  the estimation error $e_t$ are random variables for all $t$.
We consider the following distortion based metric to quantify protection against the adversary.
\begin{definition}\label{Df:e-protected}
The agent's state is $\epsilon$-\textit{protected} against the eavesdropping adversary for some $\epsilon > 0$ if  
\begin{align} \label{eq:df_ep}
     \liminf_{t\to \infty}   \E \big[ \| e_t\|^2 \big] \ge \epsilon.
\end{align} 
\end{definition}
The proposed protection metric is similar in principle to the distortion metric proposed in \cite{agarwal2020distortion}. 
Here, we use the second moment to quantify the quality of protection since it is tightly coupled to the entropy-power of the random variable $e_t$. 
In particular, a lower bound on $\E\big[ \| e_t\|^2 \big]$ immediately provides a lower bound on the entropy-power of $e_t$ and consequently a lower bound on the randomness of $e_t$.\footnote{ 
 Entropy power of a random vector $X \in \R^m$ is $N(X) \triangleq \frac{1}{2\pi e}e^{\frac{2}{m}h(X)}$, where $h(X)$ is the entropy of $X$. For any random vector $X$, we have $N(x) \ge {\rm Var}(X)$ where the equality holds only for Gaussian random vectors with uncorrelated components. 
 }

\subsection{Innovation-shared Communication}

As we will see, there will be no protection when the agent exchanges its true state, i.e., $m_t= x_t$. Thus, we propose to use an \textit{innovation-shared communication} protocol where the agent shares $\xi_{t-1}$ instead of $x_t$ at time $t$. 
In this protocol, $\xi_{t-1}$ is defined as follows:
\begin{align} \label{eq:innovation_signal}
    \xi_{t-1} =  x_t - x_{t-1}, \qquad \qquad \xi_{-1} = x_0.
\end{align}
To the best our knowledge, this particular communication protocol was first used in \cite{tsiamis2019state} to achieve protection against eavesdropping adversaries for an remote estimation application. 
Later, this protocol was used in \cite{arXivMaity2021} to study the protection against eavesdropping adversaries in networked consensus problems. 
Although the use of \eqref{eq:innovation_signal} is relatively new in understanding privacy, however, it has been used in other control problems such as in quantized optimal control. 
The benefit of using this communication protocol is that the environment can perfectly decode $x_t$ from $\xi$'s by using the relationship $x_t = \sum_{k = -1}^{t-1} \xi_k$.

\begin{remark}
Although our communication model \eqref{eq:innovation_signal} appears simple, it has several benefits. First and foremost, as we will show later, it is sufficient for rendering setup of \Cref{fig:setup} unprotected when the agent shares $x_t$, thus motivating our modifications to those algorithms to enhance their protection. 
Second, it enables us to conduct a rigorous analysis and provide insights into the effect of adversary's model parameters to protection, serving as basis for further extensions.
\end{remark}

\textbf{Problem Statement:} The objective of this paper is to analyze and derive the system level protection of system \eqref{eq:dyn} under the innovation-sharing scheme \eqref{eq:innovation_signal} against the eavesdropping adversaries in \Cref{Sec:adversarial_model}.

\section{System-Level Protection Analysis} \label{sec:mainAnalysis}
\subsection{State-based Communication}
In this section, we show that \textit{any} system in the form of \Cref{fig:setup} is unprotected against eavesdropping adversaries regardless of the eavesdropping probability $\gamma$. 
\begin{theorem} \label{thm:zero-protected}
The state of the agent is $0$-protected if $m_t = x_t$.
\end{theorem}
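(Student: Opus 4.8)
The plan is to establish $0$-protection by exhibiting a \emph{single} adversary within the class \eqref{eq_adversary_state_general}--\eqref{eq_fixed_weight_b} whose mean-squared error vanishes asymptotically; this shows that no $\epsilon>0$ can satisfy \eqref{eq:df_ep}. With $m_t = x_t$, the natural candidate is the zero-order-hold adversary obtained by taking $b = 1$ and, say, the deterministic initialization $z_{-1} = 0$. Under this choice the recursion \eqref{eq_adversary_state_general} becomes $z_t = \bar\mu_t z_{t-1} + \mu_t x_t$, so that $z_t$ simply retains the most recently intercepted state. Writing $\tau_t \triangleq \max\{s \le t : \mu_s = 1\}$ for the time of the latest successful interception, one gets $\hat{x}_t = z_t = x_{\tau_t}$ whenever at least one success has occurred by time $t$, and $\hat{x}_t = 0$ otherwise.

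First I would control the interception process. Since $\gamma > 0$ and the $\mu_t$ are independent, Borel--Cantelli guarantees that almost surely infinitely many successes occur; in particular there is an almost-surely finite first success time after which $\hat{x}_t = x_{\tau_t}$ with $\tau_t \to \infty$. Combining this with \Cref{assm:state_convergence}, which ensures $x_t \to x^*$ for the (fixed, hence bounded) trajectory $\{x_t\}$, I would conclude that $e_t = x_{\tau_t} - x_t \to x^* - x^* = 0$ almost surely.

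The remaining step is to upgrade this almost-sure convergence to convergence of the second moment $\E[\|e_t\|^2]$. Here I would use the fact that a convergent sequence is bounded: there is a deterministic constant $M$ with $\|x_t\| \le M$ for all $t$, whence $\|e_t\|^2 \le (2M)^2$ on the success event and $\|e_t\|^2 = \|x_t\|^2 \le M^2$ on the complementary (vanishing-probability) event. A uniform deterministic bound together with $e_t \to 0$ almost surely lets me invoke the dominated (bounded) convergence theorem to get $\lim_{t\to\infty}\E[\|e_t\|^2] = 0$, and therefore $\liminf_{t\to\infty}\E[\|e_t\|^2] = 0$. This rules out \eqref{eq:df_ep} for every $\epsilon > 0$, establishing $0$-protection.

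I expect the interchange of limit and expectation to be the delicate point: almost-sure convergence of $e_t$ by itself does not yield $\E[\|e_t\|^2] \to 0$, so the boundedness of the state trajectory—a direct consequence of \Cref{assm:state_convergence}—is what does the real work, while the low-probability ``no interception yet'' event must be absorbed cleanly, which the deterministic choice $z_{-1} = 0$ makes immediate. If the trajectory $\{x_t\}$ itself carries randomness independent of $\{\mu_t\}$, I would run the identical argument conditionally on $\{x_s\}_{s\ge 0}$ and then average.
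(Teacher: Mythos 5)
Your proof is correct, and it selects exactly the adversary the paper uses ($b=1$, $z_{-1}=0$, i.e., hold the last successfully intercepted state), but the analysis runs along a genuinely different route. The paper stays entirely at the level of moments: it writes the error recursion $e_{t+1}=\bar\mu_{t+1}(e_t-\xi_t)$ and the induced deterministic recursions $\E[e_{t+1}]=\bar\gamma(\E[e_t]-\xi_t)$ and $\Sigma_{t+1}=\bar\gamma\big(\Sigma_t-\E[e_t]\xi_t\T-\xi_t\E[e_t]\T+\xi_t\xi_t\T\big)$, then concludes both tend to zero because $\bar\gamma<1$ and $\xi_t\to 0$ under Assumption~\ref{assm:state_convergence} (a stable linear recursion driven by a vanishing input). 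You instead argue pathwise: the second Borel--Cantelli lemma gives infinitely many successes almost surely, so $e_t=x_{\tau_t}-x_t\to 0$ a.s., and the uniform bound on the convergent trajectory lets you pass to $\E[\|e_t\|^2]\to 0$ by bounded convergence. Both arguments are sound, and you correctly flag the one step that actually needs care --- almost-sure convergence alone does not give convergence of second moments, and boundedness of $\{x_t\}$ is what closes that gap. What each buys: the paper's moment computation is the template reused throughout the rest of the analysis (the same recursions reappear in the innovation-sharing case, where the limit is \emph{nonzero} and must be computed exactly, something a pathwise argument cannot deliver); your argument is more elementary and more robust, since it uses only that the adversary eventually latches onto $x_\tau$ with $\tau\to\infty$ rather than the specific linear structure of \eqref{eq_adversary_state_general}, and it essentially formalizes the intuition the paper relegates to the footnote about the event $\mathcal{A}_T$. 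Your closing remark about conditioning on $\{x_s\}$ if the trajectory is random is a reasonable safeguard, though unnecessary here since the trajectory is deterministic given the problem data.
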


\begin{proof}
Let $b=1$ and $z_{-1}=0$. Then, \eqref{eq_adversary_state_general} simply becomes
$$
    z_{t+1} = \bar\mu_{t+1} z_t + \mu_{t+1} x_{t+1}
$$
where $\mu_t$ is as defined in \eqref{eq:mu_bernoulli}. In this case, $z_i$ is the adversary's estimate of $x_i$, and thus $\hat{x}_i^{\rm ad} = z_t.$ 
The estimation error $\ead_t\triangleq \hat{x}^{\rm ad}_t - x_t$ follows the dynamics $\ead_{t+1}=\bar\mu_{t+1} \big( \ead_t - \xi_t\big),$
where $\xi_t = x_{t+1}-x_t$.
The estimation error $\ead_i$ is a random process due to the presence of the Bernoulli random variables $\mu_t$. 
Let 
$\sad_t \triangleq \E[\ead_t\ead_t\T ]$. We have
\begin{align}
    \E[\ead_{t+1}] &=\bar\gamma \big(\E[\ead_t] -\xi_t\big) \label{E:ead_state_communication}\\
    \sad_{t+1}
            &= \bar\gamma \big(\sad_t - \E[\ead_t]\xi_t\T-\xi_t\E[\ead_t]\T + \xi_t\xi_t\T\big), \label{E:sad_state_communication} 
\end{align}
where we have used the fact that the random variables $\mu_t$ is  independent of $\{\mu_s\}_{s< t}$ and $\{x_s\}_{s\ge 0}$.
Note that $\lim_{t\to \infty}\xi_t = 0$ since $x_t$ converges (Assumption~\ref{assm:state_convergence}). 
Consequently, from \eqref{E:ead_state_communication}--\eqref{E:sad_state_communication}, we obtain $\E[\ead_t]\to 0$ and $\sad_t\to 0$ as $t\to \infty$. 
This completes the proof.
 \end{proof} \vspace{4 pt}
 
The intuition behind this result is that, an adversary is able to intercept a transmission far in the future with probability $1$, and hence, it obtains $x_t$ for a large enough~$t$.\footnote{
 Let us define an event $\mathcal{A}_T =\{\exists~t\ge T  \text{  such that  } \mu_t =1\}$.
 The complementary event $\mathcal{A}^c_T = \{\forall ~t\ge T, ~~\mu_t = 0\}=\cap_{t\ge T} \{\mu_t =0\}$.
 Therefore, $\bbP(\mathcal{A}_T^c)= \prod_{t\ge T}\bbP(\mu_t = 0)=\prod_{t\ge T}(1-\gamma_i) = 0$ since $1 > \gamma_i > 0$.
 Consequently, $\bbP(\mathcal{A}_T) = 1$ for all $T < \infty$. 
 The event $\mathcal{A}_T$ denotes a successful interception at time later than $T$.
 }
In fact, if $m_t = g(x_t)$ and $x_t$ can be inferred/decoded from $m_t$, then the agent state is also not protected.

\subsection{Innovation-based communications}
Let the agent exchange state increments $\xi$ instead of actual state $x$. Thus,  
\begin{align}
    x_{t+1} = x_t + \xi_t, \quad \forall t\ge -1, \label{eq_x_t1}
\end{align}
where we define $x_{-1} = 0$. 
In this case the exchanged message at time $t$ is $m_t = \xi_{t-1}$.
Thus, \eqref{eq_adversary_state_general} becomes
\begin{align}\label{eq_z_t1}
    z_{t} = \bar\mu_{t} b  z_{t-1} +  \mu_{t}\xi_{t-1}.
\end{align}
Here, $z$ is the adversary's estimate of $\xi$ and thus
\begin{align} 
\xad_{t+1} = \xad_t + z_{t+1}, \quad \forall t\ge -1, \label{eq_xhat_t1}
\end{align}
with $\xad_{-1} = 0$. 
Then the error $e_t = \xad_t - x_t$ satisfies
\begin{align}
    e_{t+1} &= e_t + z_{t+1} - \xi_t \label{eq_et1}
\end{align}
with $e_0 =z_0-\xi_{-1} = \bar\mu_0(z_{-1}-x_0)$.

Next, taking expectation on both sides of \eqref{eq_z_t1} yields
\begin{align}
    \E z_{t+1} 
    &= c \E z_t + \gamma\xi_t = c^{t+1} \E z_0 + \gamma \omega_t, \label{eq_TE_z}
\end{align}
where $c:= b\bar{\gamma}$, $\omega_t = \textstyle \sum_{k=0}^t \xi_k c^{t-k}$, and $\E z_0 = \gamma x_0 + \bar{\gamma}\E z_{-1}$ with $z_{-1} = z_{-1}$.  
Furthermore, 
\begin{align} \label{eq_TE_znorm}
    \E \|z_{t+1}\|^2 & = bc \E\|z_t\|^2 +  \gamma\|\xi_t\|^2 
\end{align} 
with $\E \|z_0\|^2 \!=\! \bar{\gamma}\E\|z_{-1}\|^2 \!+\! \gamma\|x_0\|^2$. 
Thus, taking expectations on both sides of \eqref{eq_et1} yields
\begin{align}
    \E e_{t+1} 
    & \overset{\eqref{eq_TE_z}}{=} \E e_t + c \E z_t - (1-\gamma)\xi_t \label{eq_TE_et}
\end{align}
and $\E e_0 = \bar{\gamma}(\E z_{-1}-x_0)$.
Using the above relations, 
we can find the limit $\E \|e_{\infty}\|^2 := \lim_{t\to \infty} \E\|e_t\|^2$ as follows.

\begin{theorem}\label{thm_protection_infinite_horizon}
Suppose $Q\!=\!\sum_{t}\|\xi_t \|^2 \!<\! \infty$ and $\bar{\gamma}b^2 \!<\! 1$. 
Let $\rho \!=\! \frac{b\gamma}{\bar{c}}$, $\nu \!=\! b \!-\!\bar{\gamma} \!-\! \rho\gamma$, and $x^*=\lim_{t\to\infty}x_t$. Then,
\begin{align} 
    \textstyle \frac{\bar{c}}{\bar{\gamma}}\E \|e_{\infty}\|^2 
    &= \textstyle \frac{b(b+1)}{1-bc}(\bar{\gamma}\E\|z_{-1}\|^2 + \gamma\|x_0\|^2 ) - b\textstyle \|x_0\|^2 \nonumber\\
    &\quad + \E\|z_{-1}-x_0\|^2 
    +(\textstyle  \frac{b(b+1)}{1-bc} + \bar{\rho})\gamma Q -2\rho R \nonumber\\
    &\quad- \textstyle \frac{2\bar{b}\bar{\gamma}}{\bar{c}}\inprod{\E z_{-1}-\bar{b}x_0}{d_0}-\nu\|d_0\|^2, \label{eq_themorem_main}
\end{align} 
where $d_0 \!=\! x^*\!-\!x_0$ and $R \!=\! \sum_{t\ge 0} \E z_t^T\xi_t$. 
\end{theorem}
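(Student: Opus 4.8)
The plan is to track the scalar sequence $W_t := \E\|e_t\|^2$ through its one-step increment and then telescope to the limit. Starting from the error recursion \eqref{eq_et1}, I would expand $\|e_{t+1}\|^2 = \|e_t + z_{t+1} - \xi_t\|^2$ and take expectations. The only nonroutine term is the cross-correlation $\E[e_t\T z_{t+1}]$; using \eqref{eq_z_t1} together with the independence of $\mu_{t+1}$ from $(e_t,z_t)$ and from the deterministic trajectory $\{x_s\}$ (Assumption~\ref{assm:state_convergence} guarantees $x_t\to x^*$, so each $\xi_t$ is deterministic), this reduces to $\E[e_t\T z_{t+1}] = c\,P_t + \gamma\,\E[e_t]\T\xi_t$, where $P_t := \E[e_t\T z_t]$ is an auxiliary cross-correlation and $c=b\bar\gamma$. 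This yields the clean increment
\[
 W_{t+1}-W_t = V_{t+1} + \|\xi_t\|^2 + 2cP_t - 2\bar\gamma\,\E[e_t]\T\xi_t - 2\,\E[z_{t+1}]\T\xi_t ,
\]
with $V_t=\E\|z_t\|^2$, so that $\E\|e_{\infty}\|^2 = W_0 + \sum_{t\ge0}(W_{t+1}-W_t)$; showing that this series of increments converges absolutely simultaneously proves the limit exists and furnishes its value.

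Next I would evaluate each resulting series in closed form, reducing everything to $Q$, $R$, $d_0=x^*-x_0$, $x_0$, and $z_{-1}$. Summing the stable scalar recursion \eqref{eq_TE_znorm} (valid since $bc=\bar\gamma b^2<1$) gives $\sum_{t\ge0}V_{t+1}$ in terms of $V_0$ and $Q$; pairing the mean recursion \eqref{eq_TE_z} with $\xi_t$ and using $\E z_{t+1}=c\E z_t+\gamma\xi_t$ gives $\sum_{t\ge0}\E[z_{t+1}]\T\xi_t = cR+\gamma Q$; and summing the $P$-recursion obtained from the same cross-term identity expresses $\sum_{t\ge0}P_t$ in terms of $P_0$, $\sum V_{t+1}$, $R$, and the remaining sum $A:=\sum_{t\ge0}\E[e_t]\T\xi_t$. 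All of these are geometric-type series that converge absolutely because $|c|=|b|\bar\gamma<1$ and $bc<1$, and because $\xi_t\to0$ with $Q<\infty$; Cauchy--Schwarz against $\sum\|\xi_t\|^2=Q$ controls the mixed sums and legitimizes interchanging limit and summation.

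The crux is the term $A=\sum_{t\ge0}\E[e_t]\T\xi_t$, which survives in $W_{\infty}$ with coefficient $-2\bar\gamma(1-\rho)$ after substituting $\sum_{t\ge0}P_t$ (here $\rho=b\gamma/\bar c$ and one checks $c\gamma/\bar c=\bar\gamma\rho$). Since the target \eqref{eq_themorem_main} contains no such sum, $A$ must be collapsed onto $R$, $d_0$, $x_0$, $z_{-1}$. I would do this by summation by parts: writing $\xi_t=X_t-X_{t+1}$ with the tail $X_t=\sum_{s\ge t}\xi_s=x^*-x_t$ (so $X_0=d_0$ and $X_t\to0$), and using the mean increment $\E e_{t+1}-\E e_t=\E z_{t+1}-\xi_t$ read directly off \eqref{eq_et1}. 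This rewrites $A$ through the already-evaluated quantities $\sum\E z_t$, $R$, and $d_0$, after which $A$ disappears as an independent unknown.

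Finally I would assemble $\E\|e_{\infty}\|^2 = W_0 + \sum_{t\ge0}V_{t+1} + Q + 2c\sum_{t\ge0}P_t - 2\bar\gamma A - 2(cR+\gamma Q)$, substitute the boundary data $W_0=\bar\gamma\,\E\|z_{-1}-x_0\|^2$, $V_0=\bar\gamma\E\|z_{-1}\|^2+\gamma\|x_0\|^2$, $\E z_0=\gamma x_0+\bar\gamma\E z_{-1}$, and $P_0=\bar\gamma(\E\|z_{-1}\|^2-x_0\T\E z_{-1})$ (the weight $b_0=1$ enters exactly here), multiply by $\bar c/\bar\gamma$, and group terms using $\rho$, $\bar\rho$, $\nu=b-\bar\gamma-\rho\gamma$, and $\bar c=1-c$. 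I expect the main obstacle to be precisely this bookkeeping: keeping the three coupled recursions for $\E z_t$, $P_t$, $W_t$ consistent with their $t=0$ boundary values, and verifying that the summation-by-parts identity for $A$ reproduces the coefficients $-2\rho R$, $-\nu\|d_0\|^2$, and the cross term $-\frac{2\bar b\bar\gamma}{\bar c}\inprod{\E z_{-1}-\bar b x_0}{d_0}$ in \eqref{eq_themorem_main}; the convergence justifications, while necessary, are routine given $\bar\gamma b^2<1$ and $Q<\infty$.
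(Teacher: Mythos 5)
Your plan is correct in outline and follows the same skeleton as the paper's proof: expand $\E\|e_{t+1}\|^2$ from \eqref{eq_et1}, resolve the cross term by conditioning on $\mu_{t+1}$, sum the resulting scalar recursions, and reduce the surviving series to $Q$, $R$, $x_0$, $z_{-1}$ and $d_0$. You deviate at two technical steps, both equivalent to what the paper does. First, you carry the cross-correlation $P_t=\E[e_t\T z_t]$ as an explicit coupled recursion; the paper instead eliminates it via the polarization identity $2\E[z_t\T e_t]=\E\|z_t\|^2+\E\|e_t\|^2-\E\|z_t-e_t\|^2$ combined with $z_t-e_t=-(e_{t-1}-\xi_{t-1})$ from \eqref{eq_et1}, which collapses everything into the single scalar recursion $\delta_{t+1}=c\delta_t+\phi_t$ for $\delta_t=\E\|e_t\|^2-\E\|e_{t-1}-\xi_{t-1}\|^2$. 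Second, for the surviving series $A=\sum_t\inprod{\E e_t}{\xi_t}$ (the quantity called $P$ in Lemma~\ref{lem_PQRS}) you propose Abel summation against the tails $x^*-x_t$, whereas the paper shows that $u_t-\rho\E z_t-\nu x_t$ with $u_t=\bar b\E e_t+b\E z_t$ is conserved, yielding the combination $\bar bP+(b-\rho)R$ in closed form; both are the same telescoping idea exploiting $\xi_t=x_{t+1}-x_t$, and your coefficient $-2\bar\gamma(1-\rho)=-2\bar\gamma\bar b/\bar c$ on $A$ matches the paper's $-2\bar bP$ term after the $\bar c/\bar\gamma$ rescaling. One caution: your claim that these mixed sums converge absolutely by Cauchy--Schwarz against $Q$ is an overclaim for $A$ specifically, since $\E e_t$ does not tend to zero and $\sum_t\|\xi_t\|$ may diverge even when $Q<\infty$ (the paper's own example $\xi_t=(-1)^t/\sqrt{t}$); when $b=1$ the series $A$ need not converge at all, and it is only the combination $\bar bP_t+bR_t$, whose limit is read off the telescoped identity, that is guaranteed to exist. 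Your Abel-summation route also generates the intermediate sum $\sum_{t\ge1}\inprod{\E z_t}{x^*-x_t}$, which requires one further exchange of summation through $\E z_{t+1}=c\E z_t+\gamma\xi_t$ to fold back into $R$ and $Q$; this is exactly the bookkeeping you flag, and the conserved-quantity formulation of Lemma~\ref{lem_PQRS} is how the paper avoids it.
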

\begin{proof}
See Appendix~\ref{proof_main_theorem_infinite_horizon}.
\end{proof}

\begin{remark}
    This result holds without any assumption on the agent's dynamics $f$, its control objective, or the structure of $I_t$, which makes out analysis applicable to a wide range of problems. 
    Here, $f$ and $u_t$
    indirectly affect the protection amount through the variable $\xi_t$. 
    Different choices for $f, u_t$ or $I_t$ will affect the norm $\|\xi_t\|$ differently, and consequently, resulting in different amounts of protection. 
\end{remark}

Let us note the following. First, conditions $Q<\infty$ and $\bar{\gamma}b^2<1$ are sufficient for the stability and boundedness of the systems in \eqref{eq_TE_z}--\eqref{eq_TE_et}, and hence the finiteness of $\E \|e_{\infty}\|$ given above.\footnote{Note that $\bar{\gamma}b^2<1$ also implies $|c|<1$. Although $\xi_t\to 0$ and $\sum_t \xi_t = x^*-x_0$, which is finite, it does not imply that $\sum_t \|\xi_t\|^2<\infty$; to see this, consider, e.g., $\xi_t = (-1)^t/\sqrt{t}$ for $t\ge 1$.}
Second, since $\E\|z_{-1} - q\|^2 = \|\E z_{-1} - q\|^2 + \E\|z_{-1} - \E z_{-1}\|^2$ for any $q$, it follows that, to minimize $\E\|e_\infty\|^2$ the adversary must choose $\E\|z_{-1} - \E z_{-1}\|^2=0$, or equivalently, select $z_{-1}$ to be a deterministic quantity.
Third, $R$ depends linearly on $\E z_{-1}$ as follows
\begin{align*} 
&\textstyle \sum_{t}\!\inprod{\E\! z_t}{\xi_t} \overset{\eqref{eq_TE_z}}{=} \textstyle\sum_t\!\inprod{c^t\E \!z_0}{ \xi_t}  + \gamma\sum_{t}\! \inprod{\omega({t\!-\!1})}{\xi_t}\nonumber\\
& = \textstyle \inprod{\gamma x_0 + \bar{\gamma}\E z_{-1}}{\sum_t c^t\xi_t}  + \frac{\gamma}{c}\sum_{t, k<t} c^{t-k}\inprod{\xi_k}{\xi_t}. 
\end{align*} 
However, this is rather complicated for computing $R$; in practice, we use the recursive form in \eqref{eq_TE_z} instead. Below, we provide a lower bound for the protection that does not depend on $R$ explicitly; see Appendix \ref{proof_coro_protection_LB} for a proof.

\begin{corollary}\label{coro_protection_LB}
For any $\eta>0$, let $h=\frac{b^2+b-|\rho|\eta^{-1}}{1-bc}$. Then
\begin{align}
    \textstyle \frac{\bar{c}}{\bar{\gamma}}\E \|e_{\infty}\|^2 
    &\ge \|z_{-1}-x_0\|^2  + h\bar{\gamma}\|z_{-1}\|^2+(h\gamma-b)\|x_0\|^2 \nonumber\\
    &\quad\textstyle +(\gamma\bar{\rho} + h\gamma - |\rho|\eta) Q  \nonumber\\
    &\quad\textstyle - \frac{2\bar{b}\bar{\gamma}}{\bar{c}}\inprod{ z_{-1}-\bar bx_0}{d_0} -\nu\| d_0 \|^2, \label{eq_protection_LB}
\end{align} 
where equality holds at $b=0$.
\end{corollary}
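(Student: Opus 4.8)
The plan is to start from the exact identity \eqref{eq_themorem_main} and replace \emph{only} the term $-2\rho R$ by a computable lower bound, leaving every other term untouched; this is precisely why the two expressions share the same $\|z_{-1}-x_0\|^2$, cross, and $\nu\|d_0\|^2$ terms. First I would invoke the second observation following \Cref{thm_protection_infinite_horizon} to restrict attention to a deterministic $z_{-1}$ (the adversary's optimal choice), so that $\E z_{-1}=z_{-1}$, $\E\|z_{-1}\|^2=\|z_{-1}\|^2$, and $\E\|z_{-1}-x_0\|^2=\|z_{-1}-x_0\|^2$; since any randomness in $z_{-1}$ only inflates $\E\|e_\infty\|^2$, the resulting inequality remains a valid lower bound in general.

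Next I would bound $R=\sum_t\inprod{\E z_t}{\xi_t}$, using that $\xi_t$ is deterministic so $\E[z_t^T\xi_t]=\inprod{\E z_t}{\xi_t}$. Applying $-2\rho R\ge -2|\rho|\,|R|$, then Cauchy--Schwarz $|R|\le\sum_t\|\E z_t\|\,\|\xi_t\|$, and Young's inequality $2\|\E z_t\|\,\|\xi_t\|\le\eta^{-1}\|\E z_t\|^2+\eta\|\xi_t\|^2$ for the free parameter $\eta>0$ gives
\begin{align}
-2\rho R \;\ge\; -\,|\rho|\eta^{-1}\textstyle\sum_t\|\E z_t\|^2 \;-\; |\rho|\eta\, Q. \nonumber
\end{align}
The term $-|\rho|\eta\,Q$ already supplies the $-|\rho|\eta\,Q$ contribution in \eqref{eq_protection_LB}, so the remaining work is to control $\sum_t\|\E z_t\|^2$. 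Here the parameter $\eta$ plays the role of a tunable trade-off between the $\|\E z_t\|^2$ and $\|\xi_t\|^2$ parts, explaining why it appears (as $\eta^{-1}$ inside $h$ and as $\eta$ in the $Q$-coefficient) in the final bound.

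The key step is to reduce $\sum_t\|\E z_t\|^2$ to a closed form. By Jensen's inequality $\|\E z_t\|^2\le\E\|z_t\|^2$, hence $\sum_t\|\E z_t\|^2\le S:=\sum_{t\ge0}\E\|z_t\|^2$. Summing the scalar recursion \eqref{eq_TE_znorm}, namely $\E\|z_{t+1}\|^2=bc\,\E\|z_t\|^2+\gamma\|\xi_t\|^2$, over $t\ge0$ and using $bc=b^2\bar\gamma<1$ for summability yields $S-\E\|z_0\|^2=bc\,S+\gamma Q$, so that $S=\frac{\E\|z_0\|^2+\gamma Q}{1-bc}=\frac{\bar\gamma\|z_{-1}\|^2+\gamma\|x_0\|^2+\gamma Q}{1-bc}$ after substituting $\E\|z_0\|^2=\bar\gamma\|z_{-1}\|^2+\gamma\|x_0\|^2$. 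Since $|\rho|\eta^{-1}\ge0$, replacing $\sum_t\|\E z_t\|^2$ by this upper bound only decreases an already-negative quantity, so the lower bound is preserved: $-|\rho|\eta^{-1}\sum_t\|\E z_t\|^2\ge-\frac{|\rho|\eta^{-1}}{1-bc}\big(\bar\gamma\|z_{-1}\|^2+\gamma\|x_0\|^2+\gamma Q\big)$.

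Finally I would substitute this bound for $-2\rho R$ into \eqref{eq_themorem_main} and collect the coefficients of $\bar\gamma\|z_{-1}\|^2$, $\gamma\|x_0\|^2$, $\|x_0\|^2$, and $Q$. The coefficient of $\bar\gamma\|z_{-1}\|^2$ becomes $\frac{b(b+1)}{1-bc}-\frac{|\rho|\eta^{-1}}{1-bc}=\frac{b^2+b-|\rho|\eta^{-1}}{1-bc}=h$; the same $h$ emerges in front of $\gamma\|x_0\|^2$ (and combines with the standalone $-b\|x_0\|^2$ to give $h\gamma-b$), while the $Q$-coefficient collapses to $\gamma\bar\rho+h\gamma-|\rho|\eta$, reproducing \eqref{eq_protection_LB} exactly. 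The main obstacle is not any single hard estimate but the two-stage reduction of the trajectory-dependent quantity $R$ to the closed form $S$ (Jensen followed by summing the norm recursion), after which the remaining difficulty is purely bookkeeping—checking that the four coefficients fuse into the single constant $h$ plus the residual $\gamma\bar\rho$ and $-|\rho|\eta$ terms. For the equality claim, observe that at $b=0$ we have $\rho=\frac{b\gamma}{\bar c}=0$ and $h=0$, so the term $-2\rho R$ vanishes and every inequality above is applied to a zero quantity; hence \eqref{eq_protection_LB} coincides with \eqref{eq_themorem_main}.
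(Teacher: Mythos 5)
Your proposal is correct and follows essentially the same route as the paper's proof: bound $-2\rho R$ via Cauchy--Schwarz and Young's inequality with parameter $\eta$, pass from $\sum_t\|\E z_t\|^2$ to $S$ by Jensen, substitute the closed form $S=\frac{\E\|z_0\|^2+\gamma Q}{1-bc}$ from Lemma~\ref{lem_PQRS} into \eqref{eq_themorem_main}, and observe that all bounds are tight when $\rho=0$ (i.e., $b=0$). Your additional remark on restricting to deterministic $z_{-1}$ makes explicit a step the paper leaves implicit, and the coefficient bookkeeping matches \eqref{eq_protection_LB} exactly.
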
 
\begin{proof}
    See Appendix \ref{proof_coro_protection_LB}.
\end{proof}

Since the lower bound given in \eqref{eq_protection_LB} is valid for any $\eta>0$, one may maximize the RHS in \eqref{eq_protection_LB} with respect to $\eta$ to obtain a tighter bound.
On the other hand, the adversary should minimize this lower bound by selecting $z_{-1}$ and $b$ appropriately. 
Since $z_{-1}$ affects the part $\|z_{-1}-x_0\|^2  + h\bar{\gamma}\|z_{-1}\|^2 - \frac{2\bar{b}\bar{\gamma}}{\bar{c}}\inprod{ z_{-1}}{d_0}$, an optimal $z_{-1}$ clearly depends on  $x_0$ and $x^*$. 
In absence of knowledge on $x_0$ and $x^*$, a rational choice is to pick $z_{-1} = 0$, which will minimize the worst-case value of $\|z_{-1}-x_0\|^2  + h\bar{\gamma}\|z_{-1}\|^2 - \frac{2\bar{b}\bar{\gamma}}{\bar{c}}\inprod{ z_{-1}}{d_0}$. 
Finding the optimal $b$ is even more complicated as it depends not only on $x_0$ and $x^*$ but also on $Q$, where $Q$ is affected by the dynamics of both the agent and the system. 
Finding the optimal $b$ appears to be equally challenging as finding $x^*$ directly. 
 In the following, we investigate two special cases where $b=0$ or $1$.

\vspace{3pt}
\subsubsection{The case $b=0$}
By \eqref{eq_themorem_main}, we have
\begin{align}\label{protection_b=0}
    \E\! \|e_{\infty}\|^2 
    = \gamma\bar{\gamma}^2\| d_0 \|^2 + \bar{\gamma}\|z_{-1} \!-\! x_0 \!-\! \bar{\gamma}d_0\|^2  \!+\! \gamma \bar{\gamma} Q.
\end{align}
Clearly, the first two terms depend on $x_0$ and $x^*$ but not $\{x_t\}_{t\ge 1}$ directly. 
On the other hand, for a fixed $\gamma\in (0,1)$, the last term $\gamma \bar{\gamma} Q$ depends not only on initial condition $x_0$ but also on the state's total quadratic variation $\sum_{k=0}^{\infty}\|x_{k+1}-x_k\|^2$. 
This in turn has the following two consequences: 
(i) For given dynamics of the agent and the system, 
choosing $x_0$ far from the convergence point $x^*$ will yield better protection, and (ii) for a given $x_0$, a dynamic 
that produces a path with higher quadratic variation also has better protection. 
In the latter case, it is tempting to conclude that faster convergence yields a smaller protection level. 
However, it could happen that, starting from the same initial condition, a dynamic with faster convergence may exhibit more (and possibly larger) transient oscillations and thus incurs a higher quadratic variation, hence improved protection. 

Finally, we can quantify the amount of randomness in the adversary's estimates of $x$ for the case $b=0, z_{-1} =0$.
 Using \eqref{eq_TE_et}, one may write $\E e_{t+1} = \E e_t - \bar\gamma \xi_t = - \bar\gamma x_t$. 
 Furthermore, we also obtain that ${\rm{Var}} (e_t) = \E\|e_t\|^2 - \| \E e_t\|^2 = \gamma\bar\gamma \!\! \sum_{k=-1}^{t-1}\! \|\xi_k\|^2$.
  Consequently, the entropy power of $e_t$ is lower bounded by $\gamma\bar\gamma \!\! \sum_{k=-1}^{t-1}\! \|\xi_k\|^2$.
 While the first and second order moments partially characterize a random variable, the entropy power is a direct indication of its randomness. 
 Notice that $\lim_{t\to \infty} {\rm Var}(e_t) = \gamma\bar\gamma \sum_t \|\xi_t\|^2$ and therefore, the innovation-share communication scheme ensures a lower bound of $\gamma\bar\gamma \sum_t \|\xi_t\|^2$ on the asymptotic entropy power of the adversary error estimate.

\vspace{3pt}
\subsubsection{The case $b=1$} 

It is easy to see that  \eqref{eq_themorem_main} yields
\begin{align} \label{eq:b=1_case}
    \textstyle\frac{\gamma}{\bar \gamma}\E \|e_{\infty}\|^2 = (\frac{2}{\gamma}-1) \|z_{-1}\|^2 + 2\|x_0\|^2 + 2Q - 2R.
\end{align}
Clearly, $z_{-1} = 0$ is the optimal choice for the adversary. Moreover, the adversary in fact obtains an unbiased estimate of $x^*$ in this case. 
The last expression of $\E \|e_\infty\|^2$ can be further simplified and given in the following corollary; see Appendix \ref{proof_coro_protection_unbiased} for a proof.

\begin{corollary}\label{coro_protection_unbiased}
    If $b=1$ and $z_{-1} = 0$, then $\E e_\infty = 0$ and $\E \|e_{\infty}\|^2 =   \frac{\bar\gamma}{\gamma}(\|x_0\|^2 + \sum_{t\ge 0} \|\E z_t - \xi_t\|^2)$.
\end{corollary}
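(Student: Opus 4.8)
The plan is to prove the two assertions separately: the unbiasedness $\E e_\infty = 0$ follows directly from the first-moment recursions, while the second-moment formula is a purely algebraic reduction of the already-established identity \eqref{eq:b=1_case}. Throughout I set $b=1$ (so $c=\bar\gamma$, $\bar c=\gamma$) and $z_{-1}=0$, and I abbreviate $w_t:=\E z_t$, so that \eqref{eq_TE_z} reads $w_{t+1}=\bar\gamma w_t+\gamma\xi_t$ with $w_0=\gamma x_0$.

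For the unbiasedness, I would sum the mean recursion over $t\ge 0$. Writing $S:=\sum_{t\ge 0} w_t$ and using $\sum_{t\ge 0}\xi_t = x^*-x_0 = d_0$ (since $x_t=\sum_{k=-1}^{t-1}\xi_k\to x^*$ with $\xi_{-1}=x_0$), the relation $\sum_{t\ge 0}w_{t+1}=\bar\gamma S+\gamma d_0$ combined with $\sum_{t\ge 0}w_{t+1}=S-w_0=S-\gamma x_0$ gives $\gamma S=\gamma x_0+\gamma d_0=\gamma x^*$, i.e.\ $S=x^*$. Hence $\sum_{t\ge 0}(w_t-\xi_t)=x^*-d_0=x_0$. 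Feeding this into \eqref{eq_TE_et}, which for $b=1$ becomes $\E e_{t+1}=\E e_t+\bar\gamma(w_t-\xi_t)$ with $\E e_0=-\bar\gamma x_0$, yields $\E e_t=\bar\gamma\big(\sum_{k=0}^{t-1}(w_k-\xi_k)-x_0\big)\to 0$, which is the first claim.

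For the second moment I would start from \eqref{eq:b=1_case}, which at $z_{-1}=0$ reads $\tfrac{\gamma}{\bar\gamma}\E\|e_\infty\|^2=2(\|x_0\|^2+Q-R)$, and convert the right-hand side into the stated sum-of-squares form. The key ingredient is an \emph{energy balance} for $w_t$: squaring the mean recursion \eqref{eq_TE_z} and summing over $t\ge 0$ (all series converge since $Q<\infty$ forces $W:=\sum_t\|w_t\|^2<\infty$ and $|R|\le\sqrt{WQ}<\infty$) gives $W-\gamma^2\|x_0\|^2=\bar\gamma^2 W+2\gamma\bar\gamma R+\gamma^2 Q$, hence $(2-\gamma)W=\gamma\|x_0\|^2+2\bar\gamma R+\gamma Q$. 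Substituting this into $\sum_{t\ge 0}\|w_t-\xi_t\|^2=W-2R+Q$ and simplifying (the $R$-terms collapse via $\bar\gamma+\gamma=1$) reduces everything to a clean multiple of $\|x_0\|^2+Q-R$, which I then match against \eqref{eq:b=1_case}.

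The hard part is exactly this final reconciliation of constants. Carrying out the algebra I obtain $\|x_0\|^2+\sum_{t\ge 0}\|w_t-\xi_t\|^2=\tfrac{2}{2-\gamma}(\|x_0\|^2+Q-R)=\tfrac{1}{2-\gamma}\cdot\tfrac{\gamma}{\bar\gamma}\E\|e_\infty\|^2$, i.e.\ $\E\|e_\infty\|^2=\tfrac{(2-\gamma)\bar\gamma}{\gamma}\big(\|x_0\|^2+\sum_{t\ge 0}\|\E z_t-\xi_t\|^2\big)$. This carries an extra factor $2-\gamma=1+\bar\gamma$ relative to the stated coefficient $\tfrac{\bar\gamma}{\gamma}$, so I would scrutinize this point before committing. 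A scalar sanity check makes the discrepancy concrete: taking $\xi_t\equiv 0$ for $t\ge 0$ (so $x_t\equiv x_0$, $Q=R=0$, $w_t=\gamma\bar\gamma^t x_0$), a direct computation of $e_\infty=x_0(\mu_0 Y-1)$ with $Y=\sum_{j\ge 0}\prod_{k=1}^{j}\bar\mu_k$ gives $\E\|e_\infty\|^2=\gamma\,\E[Y^2]\|x_0\|^2-\|x_0\|^2=\tfrac{2\bar\gamma}{\gamma}\|x_0\|^2$, in agreement with \eqref{eq:b=1_case}, whereas $\|x_0\|^2+\sum_t\|w_t\|^2=\tfrac{2}{2-\gamma}\|x_0\|^2$; the ratio is precisely $(2-\gamma)$. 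I therefore expect the correct coefficient to be $\tfrac{(2-\gamma)\bar\gamma}{\gamma}$ and would flag the stated $\tfrac{\bar\gamma}{\gamma}$ as most likely missing a $(2-\gamma)$ factor.
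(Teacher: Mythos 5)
Your argument is correct, and the discrepancy you flag at the end is a genuine error in the stated corollary, not in your algebra. The source of the problem is step $(i)$ in the paper's Appendix~\ref{proof_coro_protection_unbiased}: it asserts $Q-2R+S=\sum_{t\ge 0}\|\E z_t-\xi_t\|^2$, but since $S=\sum_t\E\|z_t\|^2$ (Lemma~\ref{lem_PQRS}) while $\sum_t\|\E z_t-\xi_t\|^2$ expands with $\sum_t\|\E z_t\|^2$, the two sides differ by $\sum_t\big(\E\|z_t\|^2-\|\E z_t\|^2\big)=\sum_t{\rm Var}(z_t)>0$. The identity that actually holds is $Q-2R+S=\sum_{t\ge 0}\E\|z_t-\xi_t\|^2$, i.e.\ the expectation belongs \emph{outside} the squared norm; with that reading the coefficient $\frac{\bar\gamma}{\gamma}$ is correct, and your test case confirms it ($\xi_t\equiv 0$ gives $\sum_t\E\|z_t\|^2=\|x_0\|^2$ and hence $\frac{2\bar\gamma}{\gamma}\|x_0\|^2$, matching both \eqref{eq:b=1_case} and the direct computation via $\E[Y^2]=(2-\gamma)/\gamma^2$). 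With the formula as literally printed, your corrected coefficient $\frac{(2-\gamma)\bar\gamma}{\gamma}$ is the right one; your energy balance $(2-\gamma)W=\gamma\|x_0\|^2+2\bar\gamma R+\gamma Q$ for $W=\sum_t\|\E z_t\|^2$ is exactly the relation that quantifies the gap between the two readings, and the two corrected statements are equivalent.

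On the approaches: for the second moment you and the paper both start from \eqref{eq:b=1_case}; the paper substitutes $S=\|x_0\|^2+Q$ from Lemma~\ref{lem_PQRS}, whereas you derive a separate closed-form for $\sum_t\|\E z_t\|^2$ — slightly more work, but it is precisely what exposes the $\E\|\cdot\|^2$ versus $\|\E(\cdot)\|^2$ confusion. For the unbiasedness claim your route is the same in substance, but note one small gap: you define $S:=\sum_{t\ge 0}\E z_t$ and solve for it from $\sum_t w_{t+1}=\bar\gamma S+\gamma d_0$, which presupposes that this series converges ($W<\infty$ does not imply summability of $w_t$ itself). The paper avoids this by writing the exact partial-sum identity $\sum_{k=-1}^{t}\E z_{k+1}=\frac{c}{1-c}(\E z_{-1}-\E z_{t+1})+\frac{\gamma}{1-c}x_{t+1}$ and then letting $t\to\infty$ using $\E z_t\to 0$ and $x_t\to x^*$; you should phrase your summation the same way (or equivalently note that your formula $\E e_t=\bar\gamma(\sum_{k=0}^{t-1}(w_k-\xi_k)-x_0)$ already telescopes to $\frac{\bar\gamma}{\gamma}(w_0-w_t)-\bar\gamma x_t$, whose limit is $0$ without any a priori summability assumption).
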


This result shows that, to obtain an unbiased estimate, the adversary must always use the last successfully intercepted message. Additionally, similar to the previous case, both $x_0$ and the mismatch between $\xi_t$ and $\E z_t$ affect the protection of the algorithm. 
Using the dynamic \eqref{eq_TE_z}, one may write 
$\E z_t - \xi_t = - \bar{\gamma}^{t+1} x_0 - \sum_{k=0}^{t} \bar\gamma^{t-k} (\xi_k - \xi_{k-1})$ and consequently, 
$\E \|e_\infty\|^2 = \frac{\bar\gamma}{\gamma}(\|x_0\|^2 + \sum_{t\ge 0} \| \bar{\gamma}^{t+1} x_0  + \sum_{k=0}^{t} \bar\gamma^{t-k} (\xi_k - \xi_{k-1}) \|^2)$.
This shows that the achieved protection is related to the variation in the innovation signal $\xi$ whereas in the $b=0$ case, it is related to the quadratic variation of the state $x$.
This is not surprising since for the case $b=1$, the adversary's estimate of $\xi_t$ depends on the last intercepted message $\xi_{t-k}$ for some $k\le t+1$. 
Therefore, the variation in $\xi$'s trajectory should directly affect the protection. 

An unbiased estimate may not be always preferable if the adversary's objective is to minimize the amount of protection. 
Based on the expressions of protection for both $b=0$ and $b=1$ along with $z_{-1}=0$, we notice that neither of them always dominates the other.
The best choice for $b$ depends on several parameters including the dynamics \eqref{eq:dyn}, the objective of the agent, and the system itself.  
Without such knowledge, the adversary is unable to determine which value of $b$ is the best to use. 
However, some qualitative analysis could be performed here; e.g., when $\gamma \approx 0$, it appears  beneficial to use $b=0$ than $b=1$ since in the latter case $\E \|e_\infty\|^2 \to \infty$ as $\gamma \to 0$. 
A thorough investigation on the choice of $b$ is a potential future direction to pursue.

\section{Application to Distributed Optimization} \label{sec:distributedOptimization}
Next we discuss application of our approach to distributed consensus optimization. 
Consider a network of $n$ nodes, where the underlying communication is characterized by a fixed directed graph $\mathcal{G} \!=\! (\mathcal{V}, 
\mathcal{E})$. 
The objective of all the nodes is to solve the following  problem in a distributed fashion:
	\begin{align} 
	\mathrm{minimize}_{x \in \mathbb{R}^m} \quad F(x) \triangleq \textstyle \sum_{i=1}^n f_i(x), \label{eqProblem}
	\end{align} 
where $f_i$ is the local cost function of node $i\in \mathcal{V}$.

We consider a class of first-order distributed algorithms for solving this problem shown in Algorithm~\ref{algDCO} below, 
where each node repeatedly updates its local state 
based on its local gradient and information exchanged  with its direct neighbors. The goal here is for all nodes to reach a consensus that is also an optimal solution of \eqref{eqProblem}.  
\begin{algorithm}
\caption{Distributed Consensus Optimization (DCO)} \label{algDCO}
\begin{algorithmic}[1]
\STATE Initialize $x_{i,0}$, $y_{i,0}$, step size $\alpha$ for all $i\in \mathcal{V}$\\ 
\FORALL{$t\ge 0$ and $i \in \mathcal{V}$}
\STATE send $x_{i,t}$ to neighbors $j\in \N^i$ 
\STATE $x_{i,t+1} = \textstyle\sum_{j \in \mathcal{N}_i} w_{ij} x_{j,t} - \alpha y_{i,t}$
\STATE $y_{i,t+1} = h_i(y_{i,t},\nabla f_i, x_{\mathcal{N}_i,t})$
\ENDFOR
\end{algorithmic}
\end{algorithm}

Here, $x_i \in \R^m$ is the local estimate of node $i$, $\alpha >0$ is some fixed step size,
 $w_{ij} \in [0,1]$ the weight associated with link $(i,j)$, and  $y_i$ a local estimate of global gradient $\nabla F$, which is updated using only available local information 
according to some mapping $h_i$. 
To implement the algorithm, it is important to note that, at every time step $t\ge 0$, each node $i\in \mathcal{V}$ needs to send its local estimate $x_{i,t}$ to its out-neighbors $\mathcal{N}^i$ and receive $x_{j,t}$ from its in-neighbors $j \in \mathcal{N}_i$.  
Here, we use $x_{\N_i}$ to denote the vector $[x_j: j \in \mathcal{N}_i] \in \R^{|\mathcal{N}_i|m}$ and $x$ to denote $[x_j: j \in \mathcal{V}] \in \R^{nm}$.

Next, we mention a few algorithms that belong to DCO. First, a version of the well-studied distributed (sub-)gradient method (see, e.g.,  \cite{nedic2009distributed}) can be obtained with $y_{i,t+1} = \beta_t \nabla f_i(x_{i,t+1})$ for arbitrary $x_{i,0} \in \mathbb{R}^m$  for all $i\in \mathcal{V}$,  and some diminishing step size sequence $\{\beta_t\}_{t\ge 0}$. 
The distributed dual averaging in \cite{duchi2011dual} also takes a similar form with $h_i$ involving a type of projection with respect to some proximal function. 
Second, the following choice 
        \begin{align} 
        \begin{split} \label{eq:extra}
             y_{i,t+1} = y_{i,t} &+ \nabla f_i(x_{i,t+1}) - \nabla f_i(x_{i,t}) \\
                &- \textstyle\frac{1}{2\alpha} \sum_{j \in \mathcal{N}_i} w_{ij} (x_{j,t} - x_{i,t})
        \end{split}
        \end{align}
corresponds to a variant of the algorithm in  \cite{shi2015extra}.

In this set up, each node is interacting with the system through its in- and out-neighbors. 
Here we consider a single adversary that can pick any node to eavesdrop.
To compute a most conservative estimate of the protection we consider the minimum of the protection of the nodes.
That is, we define 
\begin{align*}
    \textstyle \min_{i \in \mathcal{V}} \|e_{i,\infty}\|^2
\end{align*}
to be the protection of the network in this case. Note that the presence of multiple adversaries is a potential future research direction, especially when the adversaries can communicate with each other. 
Theorem~\ref{thm:zero-protected} immediately shows that Algorithm~\ref{algDCO} is not protected. 
To achieve system level protection with innovation-shared communication scheme, we propose the a modification in the next section.

\subsection{Optimization with Innovation Communication}
Our proposed modification to DCO is a new communication protocol, where nodes communicate \textit{innovation} values $\xi_{i,t}$ instead of their state values. 
Each agent needs the true state values of their in-neighbors to update their own states (see line 5 of \Cref{algDCO}). 
In absence of these true state values, the agents need to perform an extra step to locally compute their in-neighbors' states:
\begin{align*}
    \hat{x}^i_{\mathcal{N}_i,t} = \hat{x}^i_{\mathcal{N}_i,t-1} + \xi_{\mathcal{N}_i,t-1}
\end{align*}
where $\xi_{\mathcal{N}_i,t-1}$ is the received innovation signal and $\hat{x}^i_{\mathcal{N}_i,t}$ is the estimate of the in-neighbors' state  at time $t$.
The modified algorithm, named Distributed Innovation-shared Consensus Optimization (DICO), is presented in Algorithm~\ref{AL:ICC}.
\begin{algorithm}
\caption{\small DICO: Distributed Innovation-shared Consensus Optimization} \label{AL:ICC}
\begin{algorithmic}[1]
\STATE Initialize $\hat{x}^i_{\mathcal{N}_i,-1} \!=\! 0, \xi_{i,-1} \!=\! x_{i,0}$, $y_{i,0}$ appropriately\\
\FORALL{$t\ge 0$ and $i \in \mathcal{V}$}
\STATE send ${\xi_{i.t-1}}$ to neighbors $j\in \N^i$

\STATE $ 	\hat{x}^i_{\mathcal{N}_i,t} = \hat{x}^i_{\mathcal{N}_i,t-1} + \xi_{\mathcal{N}_i,t-1}
$

\STATE	$
	\xi_{i,t} = \sum_{j \in \mathcal{N}_i} w_{ij} (\hat{x}^i_{j,t} - x_{i,t}) - \alpha y_{i,t}
	$
	
\STATE	$	x_{i,t+1} = x_{i,t} + \xi_{i,t} 	$

\STATE	$
	y_{i,t+1} = h_i(y_{i,t},\nabla f_i, \hat{x}^i_{\mathcal{N}_j,t}) 
	$
\ENDFOR
\end{algorithmic}
\end{algorithm}

Compared to DCO in terms of memory requirement, DICO further requires each node $i$ to maintain an estimate $\hat{x}^i_{\mathcal{N}_i}$ of its neighbors' states. 
However, it is important to note that DICO has the same communication overheads per iteration as DCO. 
More importantly, \emph{DCO and DICO are indeed equivalent}, and thus the convergence property of DCO carries over to DICO. 
\begin{theorem}
The convergence of Algorithm~\ref{AL:ICC} is the same as Algorithm~\ref{algDCO} if the same step size $\alpha$ is used.
\end{theorem}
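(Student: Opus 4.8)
The plan is to prove the stronger claim that DICO and DCO generate \emph{identical} trajectories $\{x_{i,t},y_{i,t}\}_{t\ge 0}$ whenever they are given the same initialization and the same step size $\alpha$; equality of the convergence behavior is then immediate. Everything reduces to a single observation: inside DICO the locally maintained estimate $\hat{x}^i_{j,t}$ reconstructs the true neighbor state $x_{j,t}$ exactly, so that the surrogate updates in lines 4--7 of \Cref{AL:ICC} collapse onto the updates in lines 4--5 of \Cref{algDCO}.

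First I would establish the reconstruction identity $\hat{x}^i_{j,t}=x_{j,t}$ for all $j\in\mathcal{N}_i$, $i\in\mathcal{V}$, and $t\ge 0$. This is purely a statement about DICO's internal variables, with no reference to DCO, so no circularity arises. By line 4 of \Cref{AL:ICC} together with the seed $\hat{x}^i_{\mathcal{N}_i,-1}=0$, the estimate is the running sum of received innovations, $\hat{x}^i_{j,t}=\sum_{s=-1}^{t-1}\xi_{j,s}$. Since lines 1 and 6 give $\xi_{j,-1}=x_{j,0}$ and $\xi_{j,s}=x_{j,s+1}-x_{j,s}$ for $s\ge 0$, this sum telescopes to $x_{j,t}$---precisely the decoding relation noted after \eqref{eq:innovation_signal}.

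Next I would substitute this identity into DICO's state update. Combining lines 5 and 6 of \Cref{AL:ICC} and using $\hat{x}^i_{j,t}=x_{j,t}$ gives
\begin{align*}
x_{i,t+1} = \Big(1-\textstyle\sum_{j\in\mathcal{N}_i}w_{ij}\Big)x_{i,t} + \textstyle\sum_{j\in\mathcal{N}_i}w_{ij}\,x_{j,t} - \alpha y_{i,t}.
\end{align*}
Under the standard row-stochasticity $\sum_{j\in\mathcal{N}_i}w_{ij}=1$ the first term vanishes, recovering exactly line 4 of \Cref{algDCO}. Likewise, since $\hat{x}^i_{\mathcal{N}_i,t}=x_{\mathcal{N}_i,t}$, the mapping $h_i$ in line 7 of \Cref{AL:ICC} receives the same arguments as in line 5 of \Cref{algDCO}, so the $y$-updates coincide as well.

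Finally I would close the argument by induction on $t$: both algorithms start from the same $x_{i,0},y_{i,0}$, and the two displays above show that a single DICO step reproduces a single DCO step, so the trajectories agree for all $t$ and therefore converge identically. I expect the delicate points to be bookkeeping rather than conceptual: keeping the innovation time index consistent (node $i$ transmits $\xi_{i,t-1}$ at time $t$, with seed $\xi_{i,-1}=x_{i,0}$), and making the row-stochasticity of $[w_{ij}]$ explicit, without which the residual self-weight $\big(1-\sum_{j}w_{ij}\big)x_{i,t}$ would make the two updates differ.
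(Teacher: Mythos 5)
Your proposal is correct and follows essentially the same route as the paper: establish $\hat{x}^i_{j,t}=x_{j,t}$ by telescoping the received innovations, substitute into the DICO updates, and conclude the two trajectories coincide. You are in fact slightly more careful than the paper, which silently uses the row-stochasticity $\sum_{j\in\mathcal{N}_i}w_{ij}=1$ to pass from $x_{i,t}+\sum_j w_{ij}(x_{j,t}-x_{i,t})$ to $\sum_j w_{ij}x_{j,t}$, a condition you rightly make explicit.
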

\begin{proof}
First, note that the local estimates are in fact exact at any time $t$, i.e., $\hat{x}^i_{\mathcal{N}_i,t} = x_{\mathcal{N}_i,t}$. This can be seen by comparing their dynamics, where $\xi_{i,-1} = x_{i,0}$ and 
$\hat{x}^i_{\mathcal{N}_i,t} = \textstyle \sum_{k=0}^{t-1} \xi_{\mathcal{N}_i,k-1}  = x_{\mathcal{N}_i,t}.$ 
Thus, DICO is equivalent to
\begin{align*}
	x_{i,t+1} &= x_{i,t} + \textstyle \sum_{j \in \mathcal{N}_i} w_{ij} (x_{j,t} - x_{i,t}) - \alpha y_{i,t}\\
	y_{i,t+1} &= h_i(y_{i,t},\nabla f_i, x_{\mathcal{N}_i,t})
	\end{align*}
which are identical to DCO. 
The proof is completed.
\end{proof}

Clearly, the innovation-shared method does not alter the convergence of DCO, which achieves exact solutions and  is in contrast to existing methods based on differential privacy, 
where the accuracy of the solution is negatively affected by the amount of privacy. 
Unlike DCO, which is 0-protected against a single adversary, DICO provides a certain level of protection for each node as analyzed section~\ref{sec:mainAnalysis}. 

\section{Simulation Results}\label{secSimulation}
We consider a logistic regression problem  as follows
\begin{align}
    \min_{x\in \R^m} \textstyle \sum_{i=1}^n \Big[  \frac{\sigma}{2n}\|x\|^2 +\sum_{j=1}^{D_i}\ln (1+e^{-(a_{ij}\T x)\ell_{ij}})\Big], \label{eq:regression}
\end{align}
where each node $i$ has access to $D_i$ samples of training data $(a_{ij}, \ell_{ij}) \in \R^m\times \{-1,1\}$ for $j=1,\ldots,D_i$ and $\sigma>0$ is a regularization parameter. 
Here, $a_{ij} \in \R^m$ includes $m$ features of the $j$-th sample of node $i$, and $\ell_{ij} $ is the corresponding label. 
Clearly, \eqref{eq:regression} is in the form of \eqref{eqProblem} with $f_i(x) =  \frac{\sigma}{2n}\|x\|^2 +\sum_{j=1}^{D_i}\ln\left(1+\exp{(-(a_{ij}\T x)\ell_{ij})} \right)$ for all $i$. We consider $n=10$, $m=3$ and $D_i=10$ for all $i$. We generated the graph randomly while ensuring that it is connected and the weight matrix $[w_{ij}]$ is doubly stochastic. 
The dynamic of $y_i$ in DICO follows~\eqref{eq:extra}.

\begin{figure}
    \centering
    \begin{subfigure}{0.48 \linewidth}
         \centering
         \includegraphics[trim = 200 320 210 330, clip, width = \linewidth]{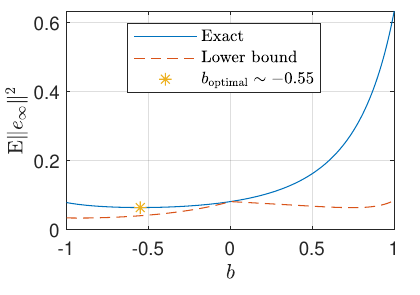}
         \caption{$\sigma = 1$} \label{fig:sigma=1}
    \end{subfigure}
    \begin{subfigure}{0.48 \linewidth}
         \centering
         \includegraphics[trim = 200 320 210 330, clip, width = \linewidth]{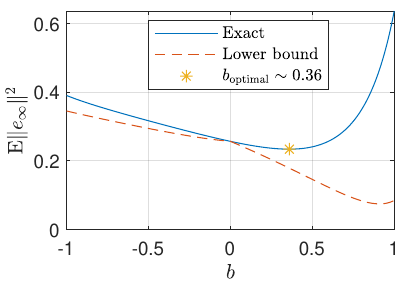}
         \caption{$\sigma = 0.01$}  \label{fig:sigma=.1}
    \end{subfigure}
     \caption{Exact protection and lower bound in \eqref{eq_protection_LB} with $\eta=1$.}
    \label{fig:varyb}
\end{figure}
\begin{figure}
    \centering
    \begin{subfigure}{0.48 \linewidth}
         \centering
         \includegraphics[trim = 200 320 210 330, clip, width = \linewidth]{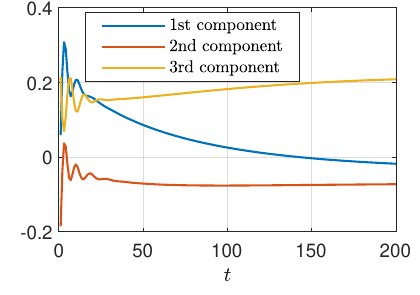}
         \caption{$\sigma = 1$} \label{fig:state_sigma=1}
    \end{subfigure}
    \begin{subfigure}{0.48 \linewidth}
         \centering
         \includegraphics[trim = 200 320 210 330, clip, width = \linewidth]{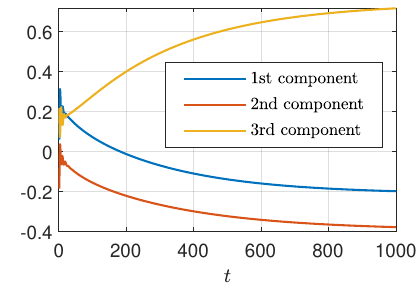}
         \caption{$\sigma = 0.01$}  \label{fig:state_sigma=.1}
    \end{subfigure}
    \caption{The state trajectory of the first agent.}
    \label{fig:varyb-state-trajectory}
    \vspace{-3mm}
\end{figure}

In the first experiment, we investigate how $b$ affects the protection. 
To that end, we randomly generated the initial states of the nodes (hereafter denoted as $x_{\text{init}}$) and considered $\alpha = 0.01$ and the eavesdropping probability $\gamma = 0.5$. 
By varying $b$ in the range $[-1, 1]$, we illustrate in Fig.~\ref{fig:varyb} both the exact protection derived in Theorem~\ref{thm_protection_infinite_horizon} and the lower bound computed in Corollary~\ref{coro_protection_LB} with $\eta = 1$. 
The optimal choice for $b$ is sensitive to the problem parameters. 
For the first experiment we chose $\sigma = 1$ in \eqref{eq:regression} and we notice (c.f. Fig.~\ref{fig:sigma=1}) that the optimal value of $b$ is approximately $-0.55$ which results in the lowest protection. 
This plot also shows that, in general, negative $b$ values are preferred by the adversaries over positive ones.
To also demonstrate how sensitive the optimal value of $b$ can be, we only changed the parameter $\sigma$ to $0.1$, and the resulting plot (c.f. Fig.~\ref{fig:sigma=.1}) is significantly different where small positive $b$ values are preferred. 
This can be explained roughly as follows. Recall that the role of $b$ is to capture how $\xi_t$ decays overall. In the first case, the convergence time is much shorter with significant oscillations 
(in the components) of the state vector $x_t$ causing its time-difference $\xi_t$ to change signs frequently (c.f. Fig.~\ref{fig:varyb-state-trajectory}). Thus, to reflect this behavior, a value $b<0$ is needed. On the other hand, when $\sigma$ decreases in the second case, we practically reduce the Lipschitz constant of the objective function, leading to a much slower convergence. In this regime,  $x_t$ converges exponentially without oscillations most of the time. As a result, $\xi_t$ almost does not change signs and thus using $b>0$ would be more suitable.
These experiments also show that an unbiased estimate (i.e., $b=1$) might not be preferred over $b=0$.

\begin{figure*}
    \centering
    \includegraphics[scale = 0.38]{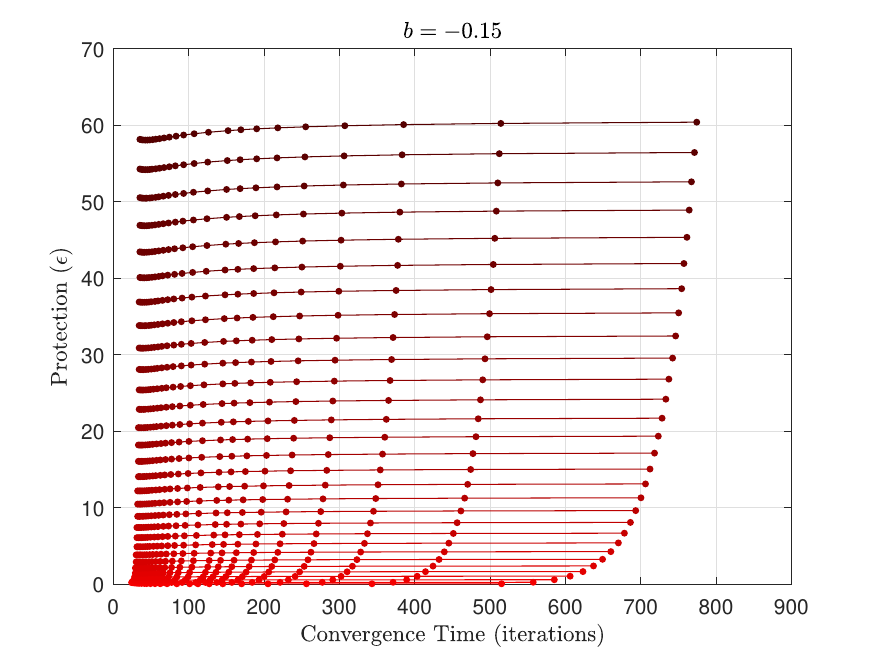}
    \includegraphics[scale = 0.38]{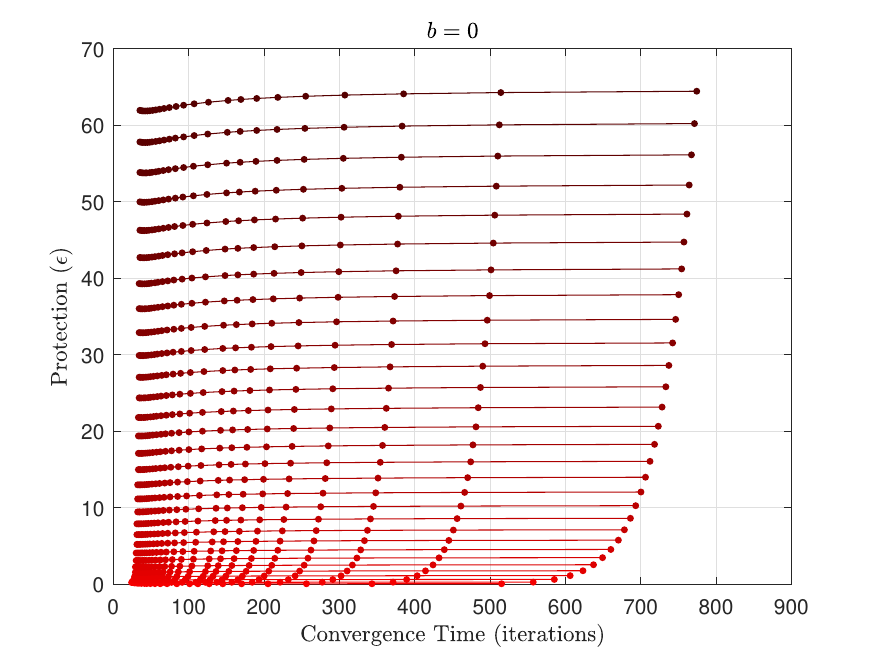}
    \includegraphics[scale = 0.38]{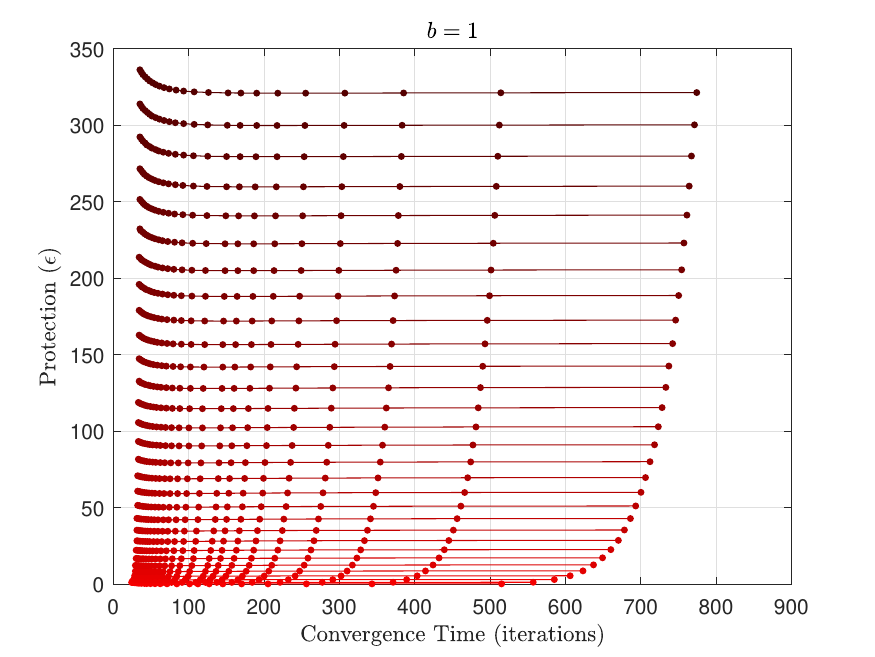}
    \put(-430,100){ $\xleftarrow{\alpha \text{   increases   }}$}
     \put(-360,30){\rotatebox{90}{$\xrightarrow{\|x_0\| \text{    increases   }}$}}
    \caption{Convergence speed vs. protection from Theorem \ref{thm_protection_infinite_horizon} by varying $\alpha$ and $x_0$.}
    \vspace{-3mm}
    \label{fig:protection and convergence speed}
\end{figure*}

Next, we validate the fact that $\|x_{i,0}\|$ as well as the algorithm parameters (e.g., $\alpha$) influence the amount of protection. 
The algorithm parameters control the trajectory taken by the node states $x_{i,t}$ and hence directly influencing the $\xi_{i,t}$ and the resulting protection. 
We fix $\sigma=1$ and vary the parameter $\alpha$ in \eqref{eq:extra} within the range $[0.01,0.2]$.
We choose one initial state vector $x_{\text{init}} \in \R^{nm}$ randomly and then scaled this initial state to generate $30$ initial state vectors $x_0 \in \{x_{\text{init}}, 1.5x_{\text{init}}, 2x_{\text{init}},\cdots,15.5x_{\text{init}}\}$.
For each pair of $(\alpha,x_0)$, we ran DICO and then computed two quantities: the protection amount $\epsilon$ and the convergence time defined as the number of iterations taken for the algorithm to converge to a value within $1\%$ of $x^*$.
Each line in Fig.~\ref{fig:protection and convergence speed} corresponds to a fixed $x_0$. 
As $x_0$ is scaled, the protection amount increases. 
The dots on a fixed color line represent different values of $\alpha$. 
As $\alpha$ increases, the dots move from right to left. 
 
From Fig.~\ref{fig:protection and convergence speed} we notice that, for a fixed value of $\alpha$, the amount of protection increases at the expense of convergence time when $\|x_0\|$ is increased. 
This shows a trade-off between convergence and protection for Algorithm~\ref{AL:ICC}. 
While the effect of $x_0$ on the protection (and convergence time) is somewhat straightforward from the expression in \eqref{eq_themorem_main}, that of $\alpha$, however, is not so obvious. 
Here, $\alpha$ impacts the trajectory of $\xi_i$, which is also dependent on the objective function $F$ in~\eqref{eqProblem}.
The effect of $\alpha$ on the convergence-protection curve is even more interesting. 
As $\alpha$ is increased for a fixed $\|x_0\|$,
the amount of protection slightly reduces, and then starts increasing with $\alpha$ when $b=0$ or $-0.15$. 
{This is because when $\alpha$ is sufficiently small, the algorithm converges exponentially without oscillations, 
i.e., $\|x_{k+1}-x_k\|^2 \approx C e^{-k \eta(\alpha)}, \forall k\ge 0$, where $C >0$ is a constant and $\eta(\alpha) >0$ is the convergence rate which increases with~$\alpha$. Thus, $Q \approx \|x_0\|^2+\frac{C}{1-e^{-\eta(\alpha)}}$, which decreases as $\alpha$ increases. Large values of $\alpha$ can lead to more oscillations, 
hence larger quadratic variations.}

\section{Conclusion}\label{secConclusion}

In this paper, we have studied a privacy issue of distributed systems against eavesdroppers that can intercept (successfully with some probability) communications with the goal of estimating the agent's final state. 
We show that the agents are unprotected in \textit{every} scenario where they are required to share their states. 
In contrast, by exchanging the innovation signals, the agent can harness the system-level protection that is inherently present in such systems. 
The proposed innovation-shared method is a complementary to existing approaches such as differential privacy.
One may use differential-privacy or encryption based methods along with our proposed method to obtain a higher amount of privacy than what would have been achievable from using only differential-privacy/encryption based methods.
Since our approach does not alter the accuracy of the converged solution, using it  in juxtaposition to other privacy preserving techniques will not incur further  accuracy loss.

Given the generic nature of our proposed method and the analysis, one may investigate particular problems (e.g., multi-agent consensus, rendezvous, distributed estimation)  and analyze the achievable protection for such problems under the innovation-shared communication scheme. 
In this work, we considered a class of distributed optimization problems as an example and demonstrate that the algorithm's parameters (e.g., $\alpha$, $x_0$) can in fact improve the achievable protection.
We show that there is a fundamental relation between the total quadratic variation of the innovation signal and the achievable protection.

\bibliographystyle{ieeetr}
\bibliography{ref}

\section{Appendix}

\begin{lemma}\label{lem_PQRS}
Suppose $Q:=\sum_{t\ge 0}\|\xi_t \|^2 < \infty$ and $|bc|<1$. Let $\rho \!=\! \frac{b\gamma}{1-c}$ and $\nu \!=\! b-\bar\gamma - \rho\gamma$. Then the  series $P \!=\! \sum_t\E e_t^T\xi_t, ~ 
R \!=\! \sum_t\E z_t^T\xi_t$ and $S \!=\! \sum_t\E\|z_t\|^2 $
exist. In fact, $S \!=\! \textstyle \frac{\E\|z_0\|^2 + \gamma Q}{1-bc}$ and 
\begin{align} \label{eq_PR}
    \bar{b}P  \!+\! (b\!-\!\rho)R  \!=\! \inprod{\theta_0 }{x^* \!-\! x_0}  \!+\! \textstyle\frac{\nu}{2}(\|x^*\|^2 \!-\! \|x_0\|^2 \!-\! Q) 
\end{align} 
with $\theta_0 := \bar{b}\E e_0 + (b-\rho) \E z_0 - \nu x_0$.
\end{lemma}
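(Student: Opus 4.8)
The plan is to reduce the whole statement to a single telescoping identity, after first disposing of the two easy series. Summing the scalar recursion \eqref{eq_TE_znorm} over $t$ and using $Q<\infty$ together with the observation that $bc=b^2\bar\gamma\ge 0$ (so $|bc|=bc<1$ and the homogeneous part contracts), I get $S(1-bc)=\E\|z_0\|^2+\gamma Q$, which yields both the convergence of $S$ and its closed form. For $R$, since $\xi_t$ is deterministic we have $\E z_t\T\xi_t=(\E z_t)\T\xi_t$; Jensen gives $\|\E z_t\|^2\le\E\|z_t\|^2$, and Cauchy--Schwarz on the sum then yields $\sum_t|(\E z_t)\T\xi_t|\le\sqrt{S}\sqrt{Q}<\infty$, so $R$ converges absolutely.

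The heart of the argument is to show that the combination $w_t:=\bar b\,\E e_t+(b-\rho)\E z_t$ satisfies a first-order telescoping recursion. Substituting \eqref{eq_TE_z} and \eqref{eq_TE_et} into $w_{t+1}$, the coefficient of $\E z_t$ becomes $c(\bar b+b-\rho)=c(1-\rho)$, and the definition $\rho=\frac{b\gamma}{1-c}$ is exactly the one that forces $c(1-\rho)=b-\rho$ (indeed $\rho(1-c)=b-c=b-b\bar\gamma=b\gamma$). Thus the $\E z_t$-coefficient is reproduced and $w_{t+1}=w_t+\nu\xi_t$, where a short computation identifies the residual $\xi_t$-coefficient $(b-\rho)\gamma-\bar b\bar\gamma$ with $\nu=b-\bar\gamma-\rho\gamma$. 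Solving the recursion and using $\sum_{k=0}^{t-1}\xi_k=x_t-x_0$ gives the closed form $w_t=\theta_0+\nu x_t$ with $\theta_0=\bar b\E e_0+(b-\rho)\E z_0-\nu x_0$.

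With this in hand I would sum $w_t\T\xi_t$ directly. Since $\sum_t\xi_t=x^*-x_0$ and, by the polarization identity $x_t\T\xi_t=\tfrac12(\|x_{t+1}\|^2-\|x_t\|^2-\|\xi_t\|^2)$, the series $\sum_t x_t\T\xi_t$ telescopes to $\tfrac12(\|x^*\|^2-\|x_0\|^2-Q)$, I obtain $\sum_t w_t\T\xi_t=\inprod{\theta_0}{x^*-x_0}+\tfrac{\nu}{2}(\|x^*\|^2-\|x_0\|^2-Q)$, which is the right-hand side of \eqref{eq_PR}. Because $R$ already converges and $\sum_t w_t\T\xi_t=\bar b\sum_t(\E e_t)\T\xi_t+(b-\rho)R$, the existence of $P$ follows whenever $\bar b\ne0$ by solving $P=\tfrac{1}{\bar b}\big(\sum_t w_t\T\xi_t-(b-\rho)R\big)$, and \eqref{eq_PR} is then immediate.

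The main obstacle is existence of $P$ under the weak hypothesis $Q<\infty$: since $\sum_t\|\xi_t\|$ may diverge while only $\sum_t\|\xi_t\|^2<\infty$, the crude bound $\sum_t\|\E e_t\|\,\|\xi_t\|$ need not converge, so $P$ cannot be handled like $R$. The telescoping combination is precisely what circumvents this, but it degenerates exactly at $b=1$, where $\bar b=0$, $\rho=1$, $\nu=0$, $\theta_0=0$, so that \eqref{eq_PR} collapses to the vacuous $0=0$ and gives no information on $P$. For that boundary case I would argue separately, using that $\E e_t$ converges (equivalently that $\sum_t(\E z_t-\xi_t)$ converges, cf. Corollary~\ref{coro_protection_unbiased}) together with summation by parts against the tail sums $x^*-x_t\to0$, so that the boundary term vanishes and a convergent series for $P$ remains.
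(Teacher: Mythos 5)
Your proposal is correct and follows essentially the same route as the paper's proof: the closed form for $S$ by unrolling \eqref{eq_TE_znorm}, absolute convergence of $R$ from $\|\E z_t\|^2\le\E\|z_t\|^2$ and $Q,S<\infty$, and the telescoping recursion for the combination $\bar b\,\E e_t+(b-\rho)\E z_t$ (the paper reaches the same identity $u_t-\rho\E z_t-\nu x_t=\theta_0$ in two steps, via $u_t=\bar b\,\E e_t+b\,\E z_t$ first). Your one genuine addition is flagging that this argument only yields existence of $P$ when $\bar b\neq 0$, since at $b=1$ one has $\rho=1$, $\nu=0$, $\theta_0=0$ and \eqref{eq_PR} degenerates to $0=0$; the paper's proof silently skips this case, and your proposed patch works (indeed, for $b=1$ one has $\E e_t=\frac{\bar\gamma}{\gamma}(\E z_{-1}-\E z_t)$, so $P=\frac{\bar\gamma}{\gamma}\inprod{\E z_{-1}}{x^*-x_0}-\frac{\bar\gamma}{\gamma}R$ converges because $R$ does).
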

\begin{proof} 
By \eqref{eq_TE_znorm} and conditions $|bc|\!<\!1$ and $Q\!<\!\infty$, we have
\begin{align*}
S 
&=\textstyle \sum_t (bc)^{t}\E\|z_0\|^2 + \gamma \sum_t \sum_{k=0}^t \|\xi_k\|^2 (bc)^{t-k} \nonumber\\
&=\textstyle \frac{\E\|z_0\|^2}{1-bc} + \gamma \sum_t \sum_{k}\|\xi_t\|^2 (bc)^{k} = \textstyle \frac{\E\|z_0\|^2 + \gamma Q}{1-bc}. 
\end{align*} 

Next, we show that $R$ exists, i.e., $\sum_t \inprod{\E z_t}{\xi_t}$ is a convergent series. Since $2|\E z_t^T\xi_t| \le  \|\E z_t\|^2 + \|\xi_t\|^2 \le \E\| z_t\|^2  +  \|\xi_t\|^2$, 
it follows that $R$ is absolutely convergent and $2|R| \le S + Q$. 
Now consider $P$; let $P_t$, $Q_t$, and $R_t$ denote the partial sums. Then 
\begin{align}
    \bar{b}P_t  + bR_t = \textstyle\sum_{k=0}^t  \inprod{\bar{b} \E e_k + b \E z_k}{\xi_k}.\label{eq_PR_combination}
\end{align}
Let $u_t := \bar{b}\E e_t + b \E z_t.$  
Then, by \eqref{eq_TE_et} and \eqref{eq_TE_z} we have
\begin{align*}
    &u_{t+1} 
    = u_t+ (c-b)\E z_t + (b - \bar{\gamma} )\xi_t.
\end{align*}
By multiplying both sides of \eqref{eq_TE_z} with $\rho = b\gamma/\bar{c}$ and then subtracting from the above relation, we obtain
\begin{align*}
    &u_{t+1} -\rho\E z_{t+1} = u_t - (\rho c+b\gamma) \E z_t + (b-\bar{\gamma} - \rho\gamma) \xi_t.
\end{align*}
Using $\rho = \rho c+b\gamma$ and $\nu = b-\bar{\gamma} - \rho\gamma$, we further have 
$u_{t} -\rho\E z_t 
= u_0 -  \rho \E z_0 + \nu\textstyle \sum_{k=1}^{t-1}\xi_k
= u_0 -  \rho \E z_0 + \nu(x_t-x_0).$
Thus, $u_{t}\!-\!\rho\E z_t \!-\! \nu x_t \!=\! u_0 -  \rho \E z_0 -\nu x_0 \!=\!\theta_0$. 
As a result,
\begin{align*}
    &\bar{b}P_t  + bR_t = \textstyle\sum_{k=0}^t \inprod{u_k}{\xi_k} \tag{by \eqref{eq_PR_combination}}\\
    &=\! \textstyle \inprod{\theta_0}{\sum_{k=0}^t\xi_k} \!+\!\rho \textstyle\sum_{k=0}^t\! \inprod{\E z_k}{\xi_k} \!+\! \nu\sum_{k=0}^t\!\inprod{x_k}{\xi_k}
\end{align*}
Since $\xi_k = x_{k+1}-x_k$,  $\sum_{k=0}^t\xi_k = x_{t+1}-x_0$,
and 
\begin{align*}
    \textstyle 2\sum_{k=0}^t\inprod{x_k}{\xi_k} &= \textstyle\sum_{k=0}^t \big( \|x_{k+1}\|^2 - \|x_k\|^2 - \|\xi_k\|^2 \big) \\
    &= \|x_{t+1}\|^2 - \|x_0\|^2 - Q_t.
\end{align*}
Then, 
$\bar{b}P_t  + bR_t = \inprod{\theta_0 }{x_{t+1} - x_0} + \rho R_t  + \frac{\nu}{2}(\|x_{t+1}\|^2 - \|x_0\|^2 - Q_t).$ 
Letting $t\to\infty$ implies that $R$ converges and \eqref{eq_PR} holds.
\end{proof}

\subsection{Proof of \Cref{thm_protection_infinite_horizon} }\label{proof_main_theorem_infinite_horizon}
Using \eqref{eq_et1}, $\E\|e_{t+1}\|^2$ can be expanded as follows
\begin{align} \label{eq_TE_esquare}
    &\E\|e_t -\xi_t\|^2 +\E\|z_{t+1}\|^2  + 2\E [z_{t+1}\T(e_t - \xi_t)] \nonumber\\ 
    & = \E\|e_t -\xi_t\|^2 +\E\|z_{t+1}\|^2 \\
    &\quad +2\gamma \xi_t\T \E e_t -2\gamma\|\xi_t\|^2   -2c\xi_t\T \E z_t \nonumber\\
     &\quad + c \E\|z_t\|^2 + c \E\|e_t\|^2 - c\E\|e_{t-1} - \xi_{t-1}\|^2, \nonumber
\end{align}
where the last equality is obtained by noting that
\begin{align*}
    &\E [z_{t+1}\T(e_t - \xi_t)] = \E [ \E [z_{t+1}\T(e_t - \xi_t)~|~\mu_{t+1}] ] \\
    &= \gamma \E[\xi_t\T(e_t - \xi_t)] + c \E[z_t\T(e_t - \xi_t)] \\
    &=  \gamma \xi_t\T \E e_t -\gamma\|\xi_t\|^2  -c\xi_t\T \E z_t   \\
    & \qquad + \textstyle \frac{c}{2}\big[ \E\|z_t\|^2 + \E\|e_t\|^2 - \E\|z_t - e_t\|^2 \big] \\
    &\overset{\eqref{eq_et1}}{=}  \gamma \xi_t\T \E e_t -\gamma\|\xi_t\|^2  -c\xi_t\T \E z_t   \\
    & \qquad + \textstyle \frac{c}{2}\big[ \E\|z_t\|^2 + \E\|e_t\|^2 - \E\|e_{t-1} - \xi_{t-1}\|^2 \big].
\end{align*}

Now define $\delta_t = \E\|e_t\|^2 - \E\|e_{t-1} - \xi_{t-1} \|^2$ for $t\ge 1$ and $\delta_0 =\E \|e_0\|^2 - \|x_0\|^2$. 
Rearranging \eqref{eq_TE_esquare} yields
\begin{align*}
     \delta_{t+1} &= c \delta_t  +\E\|z_{t+1}\|^2 + 2\gamma \xi_t\T \E e_t \\\nonumber
    &\quad  -2\gamma\|\xi_t\|^2  -2c\xi_t\T \E z_t + c \E\|z_t\|^2. 
\end{align*}
Using \eqref{eq_TE_znorm} to replace $\E\|z_{t+1}\|^2$ in the last equation yields
\begin{align} \label{eq_deltadynamics}
    \delta_{t+1} = c \delta_t + \phi_t = c^{t+1}\delta_0 + \textstyle\sum_{k=0}^t \phi_k c^{t-k}, 
\end{align}
with $\phi_t \!:=\! c(b\!+\!1)\!\E\!\|z_t\|^2 \!-\! \gamma\|\xi_t\|^2\! + \! 2\xi_t\T\! (\gamma\E \!e_t- c\E\!z_t)$. 
Now let $\Phi_t:=\sum_{k=0}^t \phi_k c^{t-k}$ 
and note that
\begin{align}
   &\E \|e_{t+1}\|^2  =  \delta_{t+1} + \E\|e_t-\xi_t\|^2 \nonumber\\
   &=  c^{t+1}\delta_0 + \Phi_t + \E\|e_t-\xi_t\|^2 \tag{by \eqref{eq_deltadynamics} }\\
   &=c^{t+1}\delta_0 + \Phi_t + \E\|e_t\|^2 -2\xi_t\T \E e_t +\|\xi_t\|^2. \nonumber
\end{align}
Unrolling this relation, we have 
\begin{align}
   \E \|e_{t+1}\|^2&= \textstyle \E \|e_0\|^2 + c\sum_{k=0}^t c^{k}\delta_0   + \sum_{k=0}^t\|\xi_k \|^2\nonumber\\
    &\quad \textstyle  + \sum_{k=0}^t\Phi_k - 2\sum_{k=0}^t\xi_k\T \E e_k \label{eq_etnorm_dynamic}
\end{align}
To find the limit, we will show that $\sum_{k}\Phi_k$ exists. In fact, by Lemma~\ref{lem_PQRS}, $P,Q,R$ and $S$ exist, and
\begin{align*}
    \textstyle \sum_t\phi_t &= c(b+1)S - \gamma Q + 2\gamma P - 2cR \\
    \textstyle \sum_t\Phi_t &= \textstyle \sum_{t}\sum_{k\le t} \phi_k c^{t-k} = \sum_{t,k}\phi_t c^{k} = \bar{ c}^{-1}\sum_t\phi_t.
\end{align*}
Thus, by \eqref{eq_etnorm_dynamic}, we have $\E \|e_{\infty}\|^2 = \E \|e_0\|^2 + \frac{c\E\delta_0}{\bar c} + A$
with 
\begin{align} 
    A &= \textstyle \sum_{t\ge 0} \big( \|\xi_t \|^2 + \Phi_t - 2\xi_t\T\E e_t\big)\nonumber\\
    &= Q -2P + \textstyle \frac{c(b+1)}{\bar c}S+ \frac{2\gamma P -\gamma Q - 2cR}{\bar c}.
\end{align}
Since $\delta_0 = \E\!\|e_0\|^2 - \|x_0\|^2$ and $\E\! \|e_0\|^2 = \bar\gamma \|z_{-1}-x_0\|^2$, 
\begin{align}\label{eq_AB}
    \textstyle \frac{\bar c}{\bar \gamma}\E \|e_{\infty}\|^2 = \frac{\bar c}{\bar \gamma}A + \|z_{-1}-x_0\|^2 - b \|x_0\|^2.
\end{align} 

We can expand $A$ as follows
\begin{align}
    \textstyle \frac{\bar c}{\bar \gamma}A &= \textstyle b(b+1)S + \bar b Q -  2\big( \bar bP  + bR\big) \label{eq_A_scaled}\\
    &= \textstyle \frac{b(b+1)}{1-bc}(\E\|z_0\|^2 + \gamma Q) + (\bar b+\nu) Q -2\rho R \nonumber\\
    &\quad \textstyle - \inprod{2\theta_0 + \nu x^*+ \nu x_0}{~x^*-x_0}, \nonumber
\end{align}
where the last term can be expressed as
\begin{align*}
    &\inprod{2\theta_0 + \nu x^*+ \nu x_0}{d_0} =\inprod{2\theta_0 + 2\nu x_0 + \nu d_0}{d_0}\\
    &= \textstyle 2\inprod{\theta_0 + \nu x_0}{d_0} +\nu\|d_0\|^2\\
    &= \textstyle 2\inprod{\bar b\E e_0 + (b-\rho) \E z_0}{d_0} +\nu\|d_0\|^2\\
    &= \textstyle \frac{2\bar b\bar\gamma}{\bar c}\inprod{\E z_{-1} -\bar b x_0}{d_0} +\nu\|d_0\|^2.
\end{align*}
Combining the relations above with \eqref{eq_AB} completes the proof.

\subsection{Proof of Corollary \ref{coro_protection_LB}}\label{proof_coro_protection_LB}
Note that 
$2\rho R \le  \textstyle\sum_t |\rho|\big( \eta^{-1}\|\E z_t\|^2 + \eta\|\xi_t\|^2 \big) \le |\rho|\eta Q + \sum_t |\rho|\eta^{-1}\E\| z_t\|^2 = |\rho|(\eta^{-1}S + \eta Q)$ for any $\eta>0$, where the first equality follows from Cauchy-Schwartz inequality and second one from $\|\E z_t\|^2\le \E\| z_t\|^2$. It remains to use \eqref{eq_AB}--\eqref{eq_A_scaled} and note that these bounds are tight when $\rho=0$.

\subsection{Proof of Corollary \ref{coro_protection_unbiased}}\label{proof_coro_protection_unbiased}
Let us compute $\E e_\infty$. From  \eqref{eq_et1}, we have
$\E e_{t+1} = (\sum_{k=-1}^t \E z_{k+1}) - x_{t+1}.$ 
Now, using \eqref{eq_TE_z}, we also have $\sum_{k=-1}^t \E z_{k+1}  = \sum_{k=-1}^t (c\E z_k + \gamma \xi_k)$, which implies
\begin{align*}
\textstyle\sum_{k=-1}^t \E z_{k+1} & = \textstyle\frac{c}{1-c} (\E z_{-1} - \E z_{t+1}) + \frac{\gamma}{1-c}x_{t+1}
\end{align*}
Thus, $\E e_{t+1} =  \textstyle\frac{c}{\bar c} (\E z_{-1} - \E z_{t+1}) - \frac{\bar{b}\bar{\gamma}}{\bar c}x_{t+1}$. Since $x_t\to x^*$ and $\E z_t \to 0$ as $t\to \infty$, we have
$\E e_\infty = 
\textstyle\frac{\bar\gamma}{\bar c} (b\E z_{-1} - \bar{b}x^*)$. 
Given $\E x_{-1}= 0$ and $b = 1$, we have $\E e_\infty = 0$. 

We now find $\E \|e_{\infty}\|^2$. It follows from \eqref{eq:b=1_case} and Lemma~\ref{lem_PQRS} that 
$\frac{\bar c}{\bar \gamma}\E \|e_{\infty}\|^2 = 2\|x_0\|^2 + 2Q - 2R \overset{(i)}{=} 2\|x_0\|^2 + Q - S + \sum_t \|\E z_t - \xi_t\|^2 \overset{(ii)}{=} \|x_0\|^2 + \sum_t \|\E z_t - \xi_t\|^2$, 
where $(i)$ is obtained by using the definitions of $Q,R$ and $S$ from Lemma~\ref{lem_PQRS} and observing that $Q-2R + S = \sum_{t\ge 0} \|\E z_t - \xi_t\|^2$ when $z_{-1} = 0$. 
Finally, $(ii)$ is obtained by using the relationship that $S = \textstyle \frac{\E\|z_0\|^2 + \gamma Q}{1-bc}$ from Lemma~\ref{lem_PQRS} along with the fact that  $\E\|z_0\|^2 = \gamma \|x_0\|^2$ when $z_{-1} = 0$.

\end{document}